\documentclass[11pt,a4paper]{article}

\usepackage[T1]{fontenc}
\usepackage[utf8]{inputenc}
\usepackage{amsmath,amssymb,amsthm,mathtools}
\usepackage{booktabs}
\usepackage{array}
\usepackage{enumitem}
\usepackage{float}
\usepackage[section]{placeins}
\usepackage{caption}
\usepackage[margin=2.9cm]{geometry}
\usepackage[expansion=false]{microtype}
\usepackage{tikz}
\usetikzlibrary{arrows.meta,positioning}
\usepackage[hidelinks]{hyperref}
\newtheorem{theorem}{Theorem}[section]
\newtheorem{lemma}[theorem]{Lemma}
\newtheorem{proposition}[theorem]{Proposition}
\newtheorem{corollary}[theorem]{Corollary}
\theoremstyle{definition}
\newtheorem{definition}[theorem]{Definition}

\theoremstyle{remark}
\newtheorem{remark}[theorem]{Remark}

\newcommand{\F}{\mathbb{F}}
\newcommand{\gap}[2]{\langle #1,#2\rangle}
\newcommand{\igap}[3]{\langle #1,#2\rangle@{#3}}
\newcommand{\Gaps}{\Gamma}
\newcommand{\uni}{\mathcal{U}}
\newcommand{\kap}{\kappa}
\newcommand{\acc}{\mathsf{A}}
\newcommand{\rel}{\mathsf{R}}
\newcommand{\nuf}{\nu}
\newcommand{\Live}{\mathcal{L}}
\newcommand{\op}[1]{\textsf{#1}}
\newcommand{\ans}[1]{\textit{#1}}
\newcommand{\rl}[1]{\textnormal{(N#1)}}
\newcommand{\Maint}{\mathcal{M}}
\newcommand{\Audit}{\mathcal{V}}
\newcommand{\Births}{\mathcal{B}}
\newcommand{\Consumed}{\mathcal{C}}
\newcommand{\I}{\mathcal{I}}
\newcommand{\supp}{\operatorname{supp}}
\newcommand{\outdeg}{\operatorname{out}}
\newcommand{\indeg}{\operatorname{in}}

\title{\bfseries Split Tallies: A Discrete Certificate Calculus\\ for Auditing Dynamic Ordered Sets in Constant Memory}

\author{Faruk Alpay\thanks{Correspondence: \texttt{alpay@lightcap.ai}}
\and Levent Sar{\i}o\u{g}lu}
\date{Department of Computer Engineering, Bah\c{c}e\c{s}ehir University, Istanbul, Turkey\\
\texttt{\{faruk.alpay, levent.sarioglu\}@bahcesehir.edu.tr}\\[1.2em]
\today}

\begin{document}
\maketitle

\begin{abstract}
A dynamic ordered set is maintained by one party and consulted by another.
The maintainer executes insertions, deletions, membership tests, predecessor
and successor queries, and extremum queries, and announces an answer for each;
the answers are used immediately and verified never.  We ask how little an
auditor needs in order to certify, after the fact, that \emph{every} announced
answer was correct.  The auditor we construct is passive: it watches the
operation--answer stream, requires the maintainer to append a constant number
of machine words per operation to a public, append-only \emph{tally}, and keeps
a private \emph{stock} of five words and a flag.  At audit time the maintainer
\emph{discloses} its claimed live intervals; the auditor accepts or rejects in
one pass.  Honest maintainers are accepted with probability one.  A maintainer
that misreported even a single answer among $T$ operations is accepted with
probability at most $(4T+1)/p$ over the auditor's one secret field element,
where $p$ may be taken as a $61$-bit prime, with no assumption on the
maintainer's computational power.  The certificate calculus rewrites the
\emph{gaps} of the ordered set---its maximal vacant intervals---under an
\emph{indenture discipline} that timestamps every gap by a public counter.
A discrete normal-form theorem identifies these gaps with the unique unit path
in the interval acyclic graph, equivalently with an incidence vector of
boundary $\mathbf{1}_0-\mathbf{1}_{U+1}$; insertions and deletions are then
exactly the one-vertex refinements and coarsenings of that path.  Both pillars
of the construction are provably indispensable: any deterministic
auditor with end-of-session soundness must keep at least
$2n-\log_2(2n+1)$ bits of state, the same floor binds every perfectly complete
auditor whose coin tosses are visible to the maintainer, and removing the
timestamp rule admits an explicit three-operation forgery that achieves exact
balance and is accepted with probability one.  A leaf-oriented $(2,4)$-tree
realizes the maintainer's side in $O(\log n)$ worst-case time and one extra
word per element, and a potential-function argument shows that its splits,
fusions, shares, and root events number at most $2m$ over any $m$ updates---a
bound the auditor itself can certify by counting one-word attestations.
Audits may be placed at arbitrary checkpoints at cost linear in the current
size, and accepted epochs compose with additive error.  All constants are
explicit, and a complete six-operation session, together with its forged twin,
is worked to the last residue.
\medskip

\noindent\textbf{Keywords:} dynamic ordered sets; transcript auditing;
fingerprinting; interval decompositions; incidence vectors; amortized analysis; $(2,4)$-trees; lower bounds.
\end{abstract}

\section{Introduction}\label{sec:intro}

A growing share of the data structures that programs rely on are no longer
executed where they are trusted.  The ordered dictionary at the heart of an
index, a scheduler, or a routing table may live in a procured library, behind
a remote interface, on a rented machine, or inside an accelerated component
whose internals are opaque to its caller.  The caller issues operations,
receives answers, and acts on them at once.  Correctness, when it is
established at all, is established by testing the component in isolation---a
guarantee about \emph{some} executions, not about \emph{this} one.

This paper develops the opposite guarantee for the dynamic ordered set: a
certificate that travels with the execution itself.  We place between the
maintainer of the structure and the consumer of its answers a third party of
deliberately minimal power, the \emph{auditor}.  The auditor never touches the
structure, never sends a message, and stores five machine words and a flag.
It watches the stream of operations and announced answers, and it obliges the
maintainer to append, for each operation, a short record---a
\emph{notch}---to a public, append-only sequence we call the \emph{tally}.
At a moment of the auditor's choosing the maintainer must \emph{disclose} the
intervals it claims are currently vacant, and the auditor renders a verdict.
If every announced answer was correct, the verdict is \emph{accept}, always.
If even one answer was wrong, the verdict is \emph{reject}, except with
probability at most $(4T+1)/p$ over a single field element the auditor drew in
private at the start; with a $61$-bit prime this is below $2^{-38}$ for any
session of up to a million operations.  No bound is assumed on the
maintainer's computational resources, and no key, signature, or hardness
assumption appears anywhere in the construction.

The design borrows its vocabulary, and its architecture, from a very old
auditing instrument.  For roughly six centuries the English Exchequer recorded
debts on wooden tally sticks: the amount was cut as notches across the stick,
the stick was split lengthwise through the notches, the creditor kept one
half---the \emph{stock}---and the debtor the other---the \emph{foil}---and at
settlement the halves were laid together; only the genuine counterpart would
mesh, grain and notch, with its twin (see, e.g.,
Jenkinson~\cite{jenkinson1925} and Baxter~\cite{baxter1989}).  The scheme
below is a split tally for order semantics.  The tally is the public notch
stream; the foil is the bookkeeping the maintainer carries inside its
structure; the stock is the auditor's five private words; the audit is the
meshing of the halves.  None of the parts is useful alone, and a forged half,
we prove, meshes with probability at most $(4T+1)/p$.

\subsection{What is certified, and against whom}

The object under audit is a set $S$ over a finite ordered universe,
manipulated through a totalized interface: \op{insert}, \op{delete},
\op{member}, \op{pred}, \op{succ}, \op{min}, \op{max}, each returning an
answer whose correct value is a function of the operation sequence alone
(Section~\ref{sec:model}).  The maintainer is adversarial in the strongest
sense: it may answer and notch arbitrarily, adaptively, with unbounded
computation, and it may even choose the operation sequence itself.  The
auditor is passive and silent until the audit; in particular the maintainer
learns nothing about the auditor's secret during the run.  Soundness is
retrospective: a false answer is not interrupted at the instant it is uttered,
but it is, with the stated probability, exposed at the audit, and audits may
be placed as densely as desired (Section~\ref{sec:epochs}).

The technical core is a calculus of \emph{gaps}.  A set $S$ over the
augmented universe partitions the order into $|S|+1$ maximal vacant
intervals, and every interface operation acts on this family by a local
rewrite: an insertion splits one gap in two, a deletion merges two adjacent
gaps into one, and every query is answered by exhibiting a single gap
(Section~\ref{sec:gaps}).  Section~\ref{sec:normalforms} isolates the
underlying discrete object: the current gap family is the unique directed
unit path from $0$ to $U+1$ in the interval acyclic graph, and the split and
merge rules are precisely its one-vertex refinements and coarsenings.  The
auditor therefore certifies not the set but the evolving path of gaps.  It
does so with two product accumulators in a prime field---one absorbing every
gap \emph{birth}, one absorbing every gap \emph{consumption}---and a public
counter that stamps each birth with a fresh \emph{code}.  The citation rules, which we call the \emph{indenture
discipline}, force every operation to consume the gaps it rewrites, citing
their codes, and force every answer to be the unique value determined by the
citation (Section~\ref{sec:discipline}).  Acceptance reduces to one equality
between accumulators, and the soundness analysis to unique factorization in
$\F_p[z]$ (Section~\ref{sec:audit}).

Mechanism design of this kind earns its keep only if its parts are necessary,
so Section~\ref{sec:necessity} is devoted to impossibility.  First, no
deterministic auditor can do this job in small space: end-of-session
soundness for sets of size $n$ over a universe of size $2n$ forces at least
$\log_2\binom{2n}{n}\ge 2n-\log_2(2n+1)$ bits of auditor state
(Theorem~\ref{thm:floor}), so randomness is not a convenience but a
requirement.  Second, the randomness must be \emph{hidden}: every perfectly
complete auditor whose coins are visible to the maintainer inherits the same
floor, with soundness error one below it
(Corollary~\ref{cor:visible}).  Third, the timestamp rule is load-bearing on
its own: if the auditor checks everything else but accepts citations of
not-yet-issued codes, an explicit three-operation strategy forges a
membership answer, restores \emph{exact} multiset balance---not merely a
collision---and is accepted with probability one; the same example shows that
end-of-session meshing cannot substitute for the temporal rule
(Proposition~\ref{prop:replay}).

The maintainer's side must also be realizable without penalty, and
Section~\ref{sec:maintainer} supplies a concrete one: a leaf-oriented
$(2,4)$-tree whose leaves are doubly linked and carry one extra word each---%
the code of the gap to the right.  Every operation runs in $O(\log n)$
worst-case time, emits its notch in $O(1)$ additional work, and never touches
a code during rebalancing.  The rebalancing itself is then made auditable:
we prove, by an explicit potential function with constants
$\varphi(1{:}5)=(3,1,0,2,4)$, that splits, fusions, shares, and root events
number at most $2m$ over any $m$ updates from an empty structure
(Theorem~\ref{thm:envelope}).  If the maintainer appends a one-word
\emph{attestation} per structural event, the auditor certifies the
amortized-cost claim by counting---amortization with a public envelope.
Section~\ref{sec:epochs} composes audits into epochs with halt-on-reject
semantics and additive error, and Section~\ref{sec:exemplar} executes a full
six-operation session and its forged twin numerically, every residue shown.

\subsection{Relation to prior work}\label{sec:related}

The fingerprinting mechanics descend directly from the offline memory
checkers of Blum, Evans, Gemmell, Kannan, and Naor~\cite{begkn94}, who
certified a random-access memory---an address-to-value map---with an
$\varepsilon$-biased multiset hash and write timestamps, in the
program-checking tradition of Blum and Kannan~\cite{blumkannan95}.  We
transplant the invariant from address semantics to \emph{order} semantics.
The transplant is not a relabeling: a memory checker certifies that the value
read at an address is the value last written there, whereas the queries
here---absence with its bracketing neighbors, predecessor, successor,
extrema---are properties of the \emph{arrangement} of keys, not of any stored
cell.  The gap calculus of Section~\ref{sec:gaps} is what converts
arrangement into a checkable ledger: adjacency becomes an object that can be
born, cited, and consumed, the unique-citation arguments of
Section~\ref{sec:audit} replace last-write reasoning, and the answers
themselves become forced functions of the citations.  On top of this sit
results with no memory-checking counterpart: the deterministic and
visible-coin floors of Section~\ref{sec:necessity}, the balance-restoring
replay forgery, the audited rebalancing envelope, and closed-form constants
throughout.

Certifying algorithms~\cite{mcconnell11} ask a program to emit a witness that
a \emph{single} output is correct, checkable by a simple verifier; the LEDA
system~\cite{mehlhornnaher99} demonstrated the practice at scale.  The
present setting is the dynamic analogue with an adversarial prover: thousands
of interleaved updates and queries, a verifier that may not retain the
structure, and witnesses that must compose across time.  Authenticated data
structures in the Merkle tradition~\cite{merkle89,tamassia03} solve a
different problem---public verifiability of \emph{individual} answers against
a published digest---and pay for it with cryptographic assumptions and
logarithmic-size proofs per query; here proofs are constant-size, the
guarantee is information-theoretic, and the verdict covers the entire history
at once.  Goodrich, Kornaropoulos, Mitzenmacher, and
Tamassia~\cite{goodrich17} use the word \emph{audit} for a related but
distinct forensic goal: reconstructing what an intruder learned from a
structure, rather than certifying the structure's answers to its user.

In the annotated-data-stream model of Chakrabarti, Cormode, McGregor, and
Thaler~\cite{ccmt14}, a prover annotates a stream so that a small-space
verifier can compute a function of it, and strong lower bounds trade
annotation length against verifier space for \emph{online} soundness.  Our
auditor escapes those trade-offs by design, not by contradiction: detection
here is retrospective, settled at the audit rather than at the moment of each
answer.  Whether small-state \emph{online} checking is possible at all
without cryptography is precisely the question Naor and
Rothblum~\cite{naorrothblum09} tied to the existence of one-way functions,
with the trade-offs sharpened by Dwork, Naor, Rothblum, and
Vaikuntanathan~\cite{dnrv09}; Section~\ref{sec:conclusion} states the exact
boundary this places on any strengthening of the present results.  The
algebraic identity test underlying the audit is the classical one of
Schwartz, Zippel, and DeMillo--Lipton~\cite{schwartz80,zippel79,demillo78},
and the multiset-equality use of a secret evaluation point is in the lineage
of Wegman and Carter~\cite{wegmancarter81}.  On the maintainer's side, the
$(2,4)$-tree is the classical small-order B-tree~\cite{bayermccreight72,ahu74},
its constant amortized rebalancing was established by Huddleston and
Mehlhorn~\cite{huddlestonmehlhorn82}, and our envelope theorem packages that
phenomenon, with the credit method of~\cite{tarjan85,sleatortarjan85}, into a
form an external counter can certify; skip lists~\cite{pugh90} would serve as
an alternative maintainer with expected bounds.  Kinetic data
structures~\cite{bgh99} also maintain a family of certificates alongside a
structure, but theirs expire with the motion of the data and trigger repairs,
whereas an indentured gap is consumed only by the operation that rewrites it.

\subsection{Summary of results}

Throughout, $T$ counts operations in a session, $m$ counts updates, $n$ the
current size, $U$ the universe size, and $p$ a prime exceeding
$(2T+2)(U+2)^2$.
\begin{itemize}[leftmargin=2em]
\item \textbf{Discrete normal form} (Proposition~\ref{prop:incidence} and
Theorem~\ref{thm:normalform}).  Gap families are exactly the $0$-to-$U+1$
unit paths of the interval acyclic graph, and the update rules are the unique
one-vertex path refinements and coarsenings.  This is the combinatorial core
used by every later certificate.
\item \textbf{Audit theorem} (Theorem~\ref{thm:audit}).  The constant-memory
auditor accepts every honest session with probability one and accepts a
session containing any incorrect answer with probability at most
$(4T+1)/p$, against unbounded adaptive maintainers.
\item \textbf{Deterministic floor} (Theorem~\ref{thm:floor}).  Every
deterministic auditor with end-of-session soundness keeps at least
$2n-\log_2(2n+1)$ bits of state.
\item \textbf{Visible coins} (Corollary~\ref{cor:visible}).  Every perfectly
complete auditor below that floor whose randomness is visible to the
maintainer has soundness error one.  Secrecy of the evaluation point is
therefore necessary, not incidental.
\item \textbf{Temporal rule necessity} (Proposition~\ref{prop:replay}).
Without the timestamp check, an explicit three-operation forgery achieves
exact accumulator balance and certain acceptance; meshing at disclosure does
not detect it.
\item \textbf{Maintainer} (Theorem~\ref{thm:maintainer}).  A leaf-oriented
$(2,4)$-tree supports the discipline in $O(\log n)$ worst-case time per
operation, one extra word per element, and perfect completeness.
\item \textbf{Attestation envelope} (Theorem~\ref{thm:envelope}).  Its
structural events number at most $2m$ over any $m$ updates, and the bound is
certifiable by the auditor through one-word attestations.
\item \textbf{Checkpoints and epochs} (Theorem~\ref{thm:epochs}).  An audit at
any instant costs $O(n+1)$ field operations in one pass and $O(1)$ space;
accepted epochs chain with additive soundness error.
\end{itemize}
Section~\ref{sec:exemplar} closes the loop with a fully executed session.

\section{The model}\label{sec:model}

\subsection{States, operations, answers}

Fix an integer $U\ge 1$ and let the universe be
$\uni=\{1,2,\dots,U\}$ with the natural order.  Augment it with two
\emph{sentinels}, written $0$ and $U+1$, so that
$0<1<\dots<U<U+1$.  A \emph{state} is a subset $S\subseteq\uni$; sentinels
are never elements of a state.  The interface consists of seven
\emph{operations}, each taking at most one key $x\in\uni$ and each
\emph{totalized}: every invocation returns an answer, and the correct answer
is a function of the state, hence of the operation prefix.

\begin{definition}[Operation semantics]\label{def:semantics}
Let $S$ be the current state and $x\in\uni$.
\begin{itemize}[leftmargin=2em,itemsep=0.1em]
\item $\op{ins}(x)$: if $x\notin S$ the state becomes $S\cup\{x\}$ and the
answer is \ans{inserted}; if $x\in S$ the state is unchanged and the answer is
\ans{present}.
\item $\op{del}(x)$: if $x\in S$ the state becomes $S\setminus\{x\}$ and the
answer is \ans{deleted}; otherwise the state is unchanged and the answer is
\ans{absent}.
\item $\op{mem}(x)$: the answer is \ans{present} if $x\in S$, else
\ans{absent}.
\item $\op{pred}(x)$: the answer is $\max\{s\in S: s<x\}$ if that set is
nonempty, else \ans{none}.
\item $\op{succ}(x)$: the answer is $\min\{s\in S: s>x\}$ if that set is
nonempty, else \ans{none}.
\item $\op{min}$ (resp.\ $\op{max}$): the answer is $\min S$ (resp.\
$\max S$) if $S\neq\emptyset$, else \ans{empty}.
\end{itemize}
No operation other than a fresh insertion or an effective deletion changes
the state.
\end{definition}

A \emph{session} of length $T$ is a sequence of operations
$o_1,\dots,o_T$ starting from $S_0=\emptyset$; the states
$S_0,S_1,\dots,S_T$ and the correct answers are determined by the sequence.
We write $n_t=|S_t|$ and $n=n_T$.

\subsection{Parties, streams, words}

Three parties appear.  The \emph{environment} issues operations.  The
\emph{maintainer} $\Maint$ holds whatever private data it likes, announces an
answer for every operation, and appends to a public append-only sequence,
the \emph{tally}, a short record per operation called a \emph{notch}.  The
\emph{auditor} $\Audit$ passively reads the merged stream of triples
$(o_t,\textit{answer}_t,\textit{notch}_t)$ and keeps a private state; it
sends no messages during the session.  At the audit the maintainer appends a
final notch sequence, the \emph{disclosure}, and the auditor outputs
\textsf{accept} or \textsf{reject}.

We work in the word model: a \emph{word} holds an integer of magnitude
polynomial in $U+T$, and arithmetic on words costs unit time.  All notch
fields, codes, and field elements below fit in $O(1)$ words; with the
concrete prime of Remark~\ref{rem:prime} they fit in one $64$-bit word each.

For soundness the adversary is given every power short of clairvoyance: a
\emph{maintainer strategy} may choose the operations themselves, the answers,
the notches, and the disclosure, adaptively as a function of the entire
visible history.  The auditor's only secret is one field element $z$, drawn
uniformly before the session and never used in any visible action until the
final verdict; consequently the realized transcript is a random variable
independent of $z$.  (Mid-session, the auditor's flag can change only through
checks that do not mention $z$; this independence is used, and re-examined for
chained audits, in Sections~\ref{sec:audit} and~\ref{sec:epochs}.)

\begin{definition}[Correct transcript; completeness; soundness]
\label{def:soundness}
A transcript is \emph{correct} if every announced answer equals the correct
answer of Definition~\ref{def:semantics} for the realized operation prefix.
An auditing scheme is \emph{perfectly complete} if there is a maintainer
strategy---the \emph{honest} one---that always announces correct answers and
is accepted with probability one for every operation sequence and every value
of the auditor's randomness.  The scheme is \emph{$\varepsilon$-sound} if for
every maintainer strategy,
\[
\Pr\bigl[\Audit\ \text{accepts}\ \wedge\ \text{the transcript is incorrect}
\bigr]\ \le\ \varepsilon ,
\]
the probability over the auditor's randomness alone.
\end{definition}

Detection in this model is \emph{retrospective}: an incorrect answer is
exposed at the audit covering it, not at the instant it is announced.
Section~\ref{sec:epochs} shows audits may be placed at arbitrary checkpoints,
so the exposure latency is a parameter, not a fixed cost;
Section~\ref{sec:conclusion} records why the latency cannot be driven to zero
in this information-theoretic setting.

\begin{figure}[H]
\centering
\begin{tikzpicture}[
  envbox/.style={draw, rounded corners, minimum width=2.55cm, minimum height=1.05cm, align=center, fill=white},
  party/.style={draw, rounded corners, minimum width=2.85cm, minimum height=1.05cm, align=center, fill=white},
  audbox/.style={draw, rounded corners, minimum width=3.95cm, minimum height=1.05cm, align=center, fill=white},
  flow/.style={-{Stealth[length=2.6mm]}, thick},
  lbl/.style={font=\small, align=center, fill=white, inner sep=1.5pt}
]
\node[envbox] (env)  at (0,0)    {environment};
\node[party]  (mnt)  at (5.0,0)  {maintainer\\[-1pt]\small structure $+$ foil};
\node[audbox] (aud)  at (11.7,0) {auditor\\[-1pt]\small stock: $z,\acc,\rel,\kap,t$, flag};
\draw[flow] (env.east) -- node[lbl,above=4pt]{operations} (mnt.west);
\draw[flow] (mnt.east) -- (aud.west);
\node[lbl] at (8.35,0.39) {answers $+$ notches};
\node[lbl] at (8.35,-0.39) {(the tally)};
\draw[flow, dashed] ([yshift=-2pt]mnt.south) .. controls (6.3,-1.28) and (10.15,-1.28) ..
  node[lbl,below,pos=.55]{disclosure at audit} ([yshift=-2pt]aud.south);
\draw[flow] ([xshift=7pt]env.north) .. controls (2.1,1.34) and (9.4,1.34) ..
  node[lbl,above,pos=.55]{operation stream (read passively)} ([xshift=-10pt]aud.north);
\end{tikzpicture}
\caption{The split tally.  The maintainer notches the public tally; the
auditor holds the stock; the audit is the meshing of the halves.  No message
ever flows from the auditor.}
\label{fig:model}
\end{figure}

\section{The gap calculus}\label{sec:gaps}

Order semantics becomes checkable once the set is replaced by the family of
its maximal vacant intervals.

\begin{definition}[Gaps]\label{def:gaps}
For a state $S\subseteq\uni$ let $\bar S=S\cup\{0,U+1\}$.  A \emph{gap} of
$S$ is a pair $\gap{a}{b}$ with $a,b\in\bar S$, $a<b$, and
$(a,b)\cap S=\emptyset$; equivalently, $a$ and $b$ are consecutive in
$\bar S$.  Write $\Gaps(S)$ for the set of gaps of $S$.
\end{definition}

\begin{definition}[Interlocking chain]\label{def:chain}
A finite set $G$ of pairs over $\{0,\dots,U+1\}$ is an \emph{interlocking
chain} if, listing its pairs as
$\gap{a_1}{b_1},\dots,\gap{a_k}{b_k}$ in increasing order of left endpoint,
one has $a_1=0$, $b_i=a_{i+1}$ for $1\le i<k$, $b_k=U+1$, and $a_i<b_i$ for
all $i$.  We then say the pairs \emph{mesh}.
\end{definition}

\begin{lemma}[Tiling]\label{lem:tiling}
\,(i) $|\Gaps(S)|=|S|+1$ for every state $S$.
(ii) A set of pairs equals $\Gaps(S)$ for some state $S$ if and only if it is
an interlocking chain.
(iii) The map $S\mapsto\Gaps(S)$ is a bijection from states to interlocking
chains; its inverse reads off the interior endpoints:
$S=\{a_2,\dots,a_k\}=\{b_1,\dots,b_{k-1}\}$.
\end{lemma}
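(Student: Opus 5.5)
The plan is to reduce everything to the sorted listing of $\bar S$.

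For a state $S$ with $|S|=n$, write $\bar S=\{c_0<c_1<\dots<c_{n+1}\}$, so that $c_0=0$ and $c_{n+1}=U+1$. This uses that sentinels are never elements, so $\bar S$ has exactly $n+2$ distinct elements. By the equivalent form in Definition~\ref{def:gaps}, the gaps are exactly the pairs of consecutive elements, namely $\gap{c_{j-1}}{c_j}$ for $1\le j\le n+1$. Two checks are needed here:
\begin{itemize}
\item A pair $\gap{a}{b}$ with $a<b$ in $\bar S$ and $(a,b)\cap S=\emptyset$ has no element of $\bar S$ strictly between $a$ and $b$. The sentinels lie outside $(a,b)$, since $0\le a<b\le U+1$. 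Hence $a,b$ are consecutive.
\item The converse direction is immediate.
\end{itemize}
This gives exactly $n+1$ distinct pairs, which is (i). Listed by left endpoint, these pairs are $\gap{c_0}{c_1},\dots,\gap{c_n}{c_{n+1}}$. They satisfy $a_1=0$, $b_i=a_{i+1}$, $b_k=U+1$ and $a_i<b_i$, so $\Gaps(S)$ is an interlocking chain. This is the forward half of (ii). Reading off the interior endpoints returns $\{c_1,\dots,c_n\}=S$, which proves the inverse formula in (iii) and shows that $S\mapsto\Gaps(S)$ is injective.

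For the converse half of (ii), take an interlocking chain $G$ listed as $\gap{a_1}{b_1},\dots,\gap{a_k}{b_k}$. The meshing conditions give the strict chain
\[
0=a_1<b_1=a_2<b_2=\dots<b_k=U+1 .
\]
Set $S=\{a_2,\dots,a_k\}$. These values lie strictly between $0$ and $U+1$, so $S\subseteq\uni$, and the two descriptions $\{a_2,\dots,a_k\}$ and $\{b_1,\dots,b_{k-1}\}$ agree because $b_i=a_{i+1}$. Then $\bar S=\{a_1,\dots,a_k,U+1\}$ is listed in increasing order by the strict chain above. Its consecutive pairs are therefore exactly the pairs of $G$, and by the first paragraph $G=\Gaps(S)$. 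This finishes (ii). It also gives surjectivity onto interlocking chains, and together with injectivity this yields the bijection in (iii).

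There is no real obstacle; the only step needing care is bookkeeping. Definition~\ref{def:chain} lists pairs "in increasing order of left endpoint", which presumes that left endpoints are distinct. I would note that the strict chain forces distinct left endpoints, so the listing is well defined. I would also note that the strict inequalities make the $a_i$ pairwise distinct, so $|S|=k-1$, consistent with (i).
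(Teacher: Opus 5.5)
Your proof is correct and follows the same route as the paper. You list $\bar S$ in increasing order and identify $\Gaps(S)$ with its consecutive pairs, which gives (i) and the forward half of (ii); for the converse you read $S$ off the interior endpoints and check that the consecutive pairs of $\bar S$ reproduce the chain, which also yields (iii). Your extra checks are details the paper leaves implicit, and both are sound: the ``equivalently'' clause of Definition~\ref{def:gaps} holds because the sentinels lie outside $(a,b)$, and the strict chain forces distinct left endpoints.
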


\begin{proof}
List $\bar S$ as $0=s_0<s_1<\dots<s_n<s_{n+1}=U+1$ where $n=|S|$.  By
Definition~\ref{def:gaps}, $\Gaps(S)=\{\gap{s_i}{s_{i+1}}:0\le i\le n\}$,
which has $n+1$ members and is an interlocking chain; this proves (i) and one
direction of (ii).  Conversely, an interlocking chain
$\gap{a_1}{b_1},\dots,\gap{a_k}{b_k}$ determines
$S=\{b_1,\dots,b_{k-1}\}\subseteq\uni$ (interior endpoints are neither $0$
nor $U+1$ because endpoints strictly increase along the chain), and the chain
of consecutive pairs of $\bar S$ is exactly the given one.  The two
constructions invert each other, giving (iii).
\end{proof}

\begin{lemma}[Rewrites]\label{lem:rewrites}
Let $S$ be a state and $x\in\uni$.
\begin{enumerate}[label=(\roman*),leftmargin=2.4em,itemsep=0.1em]
\item If $x\notin S$ there is a unique gap $\gap{a}{b}\in\Gaps(S)$ with
$a<x<b$, and
$\Gaps(S\cup\{x\})=\bigl(\Gaps(S)\setminus\{\gap{a}{b}\}\bigr)
\cup\{\gap{a}{x},\gap{x}{b}\}$.
\item If $x\in S$ there are unique gaps $\gap{a}{x},\gap{x}{b}\in\Gaps(S)$,
and
$\Gaps(S\setminus\{x\})=\bigl(\Gaps(S)\setminus\{\gap{a}{x},\gap{x}{b}\}\bigr)
\cup\{\gap{a}{b}\}$.
\end{enumerate}
\end{lemma}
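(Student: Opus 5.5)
The plan is to work entirely through the explicit listing used in the proof of Lemma~\ref{lem:tiling}. Write $\bar S$ as $0=s_0<s_1<\dots<s_n<s_{n+1}=U+1$. Then $\Gaps(S)=\{\gap{s_i}{s_{i+1}}:0\le i\le n\}$, and both parts reduce to comparing this consecutive-pair description for $\bar S$ with the same description for the modified augmented set.

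For (i), take $x\notin S$, $x\in\uni$. Since $s_0=0<x<U+1=s_{n+1}$ and $x\neq s_i$ for every $i$, there is exactly one index $j$ with $s_j<x<s_{j+1}$. The intervals $(s_i,s_{i+1})$ are pairwise disjoint and cover $\{1,\dots,U\}\setminus S$, which gives both existence and uniqueness of the gap $\gap{a}{b}=\gap{s_j}{s_{j+1}}$ with $a<x<b$. Inserting $x$ produces the listing $s_0<\dots<s_j<x<s_{j+1}<\dots<s_{n+1}$ of $\overline{S\cup\{x\}}$. Its consecutive pairs are all the old pairs except $\gap{s_j}{s_{j+1}}$, together with $\gap{s_j}{x}$ and $\gap{x}{s_{j+1}}$. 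This is exactly the claimed identity. For completeness I would also note that the two new pairs are distinct from the old ones, since $x\notin\bar S$, so the union is genuinely a set-level replacement.

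For (ii), take $x\in S$, so $x=s_j$ for a unique $1\le j\le n$. The gaps with right endpoint $x$ are consecutive pairs $\gap{s_i}{s_{i+1}}$ with $s_{i+1}=x$, which forces $i=j-1$. Symmetrically, the gap with left endpoint $x$ is $\gap{s_j}{s_{j+1}}$. This gives uniqueness of $\gap{a}{x}$ and $\gap{x}{b}$, with $a=s_{j-1}$ and $b=s_{j+1}$. Deleting $s_j$ from the listing gives $\overline{S\setminus\{x\}}$, whose consecutive pairs are the old ones with $\gap{s_{j-1}}{s_j}$ and $\gap{s_j}{s_{j+1}}$ replaced by $\gap{s_{j-1}}{s_{j+1}}$. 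Here I use that $j\ge1$ and $j\le n$, so both neighbors exist; this is where the sentinels are needed.

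There is no real obstacle. The only point needing care is reading ``unique'' correctly: in (ii) a gap $\gap{a}{x}$ with arbitrary $a$ has its left endpoint determined by strict monotonicity of the listing. One could alternatively derive both parts from Lemma~\ref{lem:tiling}(iii), by checking that the right-hand sides are interlocking chains whose interior endpoints are $S\cup\{x\}$ and $S\setminus\{x\}$ respectively. However, the direct listing argument is shorter, and I would present it that way.
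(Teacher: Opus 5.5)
Your proof is correct and takes the same route as the paper, which reads both rewrites off the sorted listing $0=s_0<\dots<s_{n+1}=U+1$ from the proof of Lemma~\ref{lem:tiling}: adjoining $x$ refines the unique consecutive pair that straddles it, and removing $x$ coalesces the two pairs incident to it. Your version spells out the index bookkeeping that the paper calls immediate, namely the unique $j$ with $s_j<x<s_{j+1}$ in (i) and $x=s_j$ with $1\le j\le n$ in (ii).
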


\begin{proof}
Immediate from the description $\Gaps(S)=\{\gap{s_i}{s_{i+1}}\}$ in the proof
of Lemma~\ref{lem:tiling}: adjoining $x$ refines exactly the consecutive pair
that straddles it; removing $x$ coalesces exactly the two pairs incident to
it.
\end{proof}

\begin{lemma}[Query characterizations]\label{lem:queries}
Let $S$ be a state and $x\in\uni$.
\begin{enumerate}[label=(\alph*),leftmargin=2.4em,itemsep=0.1em]
\item $x\in S$ iff some gap of $S$ has left endpoint $x$, iff some gap of $S$
has right endpoint $x$.  $x\notin S$ iff some gap $\gap{a}{b}$ of $S$ has
$a<x<b$.
\item There is exactly one gap $\gap{a}{b}\in\Gaps(S)$ with $a<x\le b$, and
for it $a=\max\{s\in S:s<x\}$ when that set is nonempty, while $a=0$ exactly
when it is empty.  Symmetrically, exactly one gap satisfies $a\le x<b$, and
its right endpoint is the correct answer of $\op{succ}(x)$, with $b=U+1$
exactly when no successor exists.
\item Exactly one gap has left endpoint $0$, namely $\gap{0}{\min S}$ if
$S\neq\emptyset$ and $\gap{0}{U+1}$ otherwise; symmetrically for right
endpoint $U+1$ and $\max S$.
\end{enumerate}
\end{lemma}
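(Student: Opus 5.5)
The plan is to reduce every clause of Lemma~\ref{lem:queries} to the explicit description used in the proof of Lemma~\ref{lem:tiling}. Write $\bar S$ as $0=s_0<s_1<\dots<s_n<s_{n+1}=U+1$, so that $\Gaps(S)=\{\gap{s_i}{s_{i+1}}:0\le i\le n\}$. The half-open intervals $(s_i,s_{i+1}]$ for $0\le i\le n$ partition $\{1,\dots,U+1\}$, and the intervals $[s_i,s_{i+1})$ partition $\{0,\dots,U\}$. Each clause then becomes a statement about consecutive elements of a sorted list.

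For (a), I will use that the left endpoints of gaps are exactly $s_0,\dots,s_n$, which is $S\cup\{0\}$, and the right endpoints are exactly $S\cup\{U+1\}$. Since $x\in\uni$ is neither sentinel, $x$ is a left endpoint iff $x\in S$, and likewise for right endpoints. For the second sentence, the rewrite Lemma~\ref{lem:rewrites}(i) already supplies a gap with $a<x<b$ whenever $x\notin S$. Conversely, if such a gap exists, then $(a,b)\cap S=\emptyset$ by Definition~\ref{def:gaps}, so $x\notin S$.

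For (b), since $x\in\{1,\dots,U\}\subseteq\{1,\dots,U+1\}$, the partition into intervals $(s_i,s_{i+1}]$ gives exactly one index $i$ with $s_i<x\le s_{i+1}$. Set $i$ to be the largest index with $s_i<x$, which exists because $s_0=0<x$. Then $s_i=\max\{s\in\bar S: s<x\}$. If $\{s\in S:s<x\}$ is nonempty, this maximum lies in $S$, because $0$ is below every element of $S$; so $a$ equals the correct predecessor. If that set is empty, the maximum is $0$. Conversely, $a=0$ forces $i=0$, which means no element of $S$ lies below $x$. The successor clause is the mirror argument: use the intervals $[s_i,s_{i+1})$ covering $\{0,\dots,U\}\ni x$, take the least $s_{i+1}>x$, and note that $U+1$ exceeds every element of $S$.

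For (c), the unique gap with left endpoint $0$ is $\gap{s_0}{s_1}$. When $S\neq\emptyset$, $s_1=\min S$; when $S=\emptyset$, $n=0$ and $s_1=U+1$. Uniqueness holds because left endpoints are distinct, since the $s_i$ strictly increase. The argument for $\max S$ is symmetric.

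No step is a real obstacle. The only point needing care is the sentinel bookkeeping in (b): I must check that the ``exactly when'' direction holds, so that $a=0$ cannot arise from a genuine predecessor. This follows because $0\notin S$ and every element of $S$ is at least $1$.
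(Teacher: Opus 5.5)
Your proposal is correct and follows essentially the paper's argument. Everything is read off the sorted listing $0=s_0<\dots<s_{n+1}=U+1$: part (b) comes from the unique index $i$ with $s_i<x\le s_{i+1}$ together with the sentinel bookkeeping you flag, and part (c) from the distinctness of left endpoints. The only cosmetic difference is that, for the second sentence of (a), you cite Lemma~\ref{lem:rewrites}(i) to get a containing gap, where the paper appeals directly to consecutiveness; both rest on the same listing.
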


\begin{proof}
All three parts read off the sorted listing
$0=s_0<\dots<s_{n+1}=U+1$.  For (b): the left endpoints
$s_0<s_1<\dots<s_n$ are distinct, and $s_i<x\le s_{i+1}$ holds for exactly
one $i$, namely $i=\max\{j:s_j<x\}$; then $s_i$ is the largest member of
$\bar S$ below $x$, which is the largest member of $S$ below $x$ unless that
set is empty, in which case $s_i=0$.  Part (a) restates consecutiveness:
an interior point of a gap is in no state, and an element of $S$ is the right
endpoint of one gap and the left endpoint of the next.  Part (c) is the case
$x$ at the boundary: $s_0=0$ begins exactly one gap and $s_{n+1}=U+1$ ends
exactly one.
\end{proof}

Lemma~\ref{lem:queries} is the pivot of the whole construction: every answer
of Definition~\ref{def:semantics} is the forced image of a single gap, and by
uniqueness the maintainer will have no latitude in \emph{which} gap to
exhibit.  Lemma~\ref{lem:rewrites} bounds the traffic.

\begin{corollary}[Traffic]\label{cor:traffic}
Under the discipline of Section~\ref{sec:discipline}, every operation
consumes at most two gaps and creates at most two; over a session of $T$
operations, at most $2T+1$ gaps are ever created (including the initial gap
of the empty state) and at most $2T$ are consumed before the audit.
\end{corollary}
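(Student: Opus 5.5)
The plan is to read the discipline of Section~\ref{sec:discipline} as prescribing, for each operation, exactly the gap rewrite given by Lemma~\ref{lem:rewrites} or the single exhibited gap given by Lemma~\ref{lem:queries}. With that in hand, the corollary follows by per-operation bookkeeping followed by summation over the session. I will proceed operation type by operation type, always under the standing assumption that each operation consumes precisely the gaps it rewrites or cites and creates precisely the gaps that replace them.

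\textbf{Updates.} For a fresh insertion $\op{ins}(x)$ with $x\notin S$, Lemma~\ref{lem:rewrites}(i) says that one gap $\gap{a}{b}$ is removed and two gaps $\gap{a}{x},\gap{x}{b}$ are added. So the operation consumes one gap and creates two. For an effective deletion, Lemma~\ref{lem:rewrites}(ii) says that two gaps are removed and one is added, so it consumes two and creates one. Insertions and deletions that leave the state unchanged, together with all the query operations $\op{mem},\op{pred},\op{succ},\op{min},\op{max}$, are answered by exhibiting a single gap via Lemma~\ref{lem:queries}. Under the indenture discipline, exhibiting a gap means citing it, consuming it, and re-issuing it under a fresh code, since the timestamp rule requires every cited gap to be consumed and the set of gaps must be preserved. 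Each such operation therefore consumes one gap and creates one. In every case at most two gaps are consumed and at most two are created, which gives the first claim.

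\textbf{Session totals.} Summing over the $T$ operations, at most $2T$ gaps are created by operations. The initial gap $\gap{0}{U+1}$ of $S_0=\emptyset$ (Lemma~\ref{lem:tiling}(i) gives exactly one) is born at the start, which gives the bound $2T+1$. Consumptions before the audit come only from operations, so there are at most $2T$ of them; the disclosure is not counted because it happens at the audit. As a sanity check, Lemma~\ref{lem:tiling}(i) makes births minus consumptions equal $n_T+1$.

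The main obstacle is the dependence on a discipline whose text comes after this statement. Specifically, I have to confirm that a query re-issues its cited gap rather than merely referencing it without consumption. In the latter case the counts are smaller, and the stated bounds still hold because they are upper bounds. The argument is therefore robust either way, provided no operation cites more than two gaps; Lemma~\ref{lem:queries} guarantees this, since every answer is determined by one gap and every update touches at most two.
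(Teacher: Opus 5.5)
Your totals are right, but the case analysis is keyed to the wrong thing. You count by the true branch ($x\in S$ or not, state changed or not) and justify each count with Lemma~\ref{lem:rewrites} or Lemma~\ref{lem:queries}. That rests on your standing assumption that each operation consumes exactly the true gaps it rewrites, which describes the honest maintainer, not the discipline. What the auditor actually folds into $\acc$ and $\rel$ is fixed by the rule of Table~\ref{tab:grammar} that the notch matches, and a dishonest maintainer chooses the rule.

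For instance, on $\op{ins}(x)$ with $x\in S$ it may announce \ans{inserted} and notch \rl{1} with a fabricated $\igap{a}{b}{c}$, where $a<x<b$ and $c<\kap_{\mathrm{start}}$. Every local check passes, and the event consumes one triple and births two. This contradicts your claim that a state-preserving insertion consumes one gap and creates one. Likewise, $\op{del}(x)$ with $x\notin S$ notched as \rl{3} consumes two.

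These are exactly the transcripts the corollary must cover. Lemma~\ref{lem:injective} needs the code bound for every triple that can arise in a session. The soundness half of Theorem~\ref{thm:audit} needs $|\Births|\le 2T+1$ and at most $2T$ run-time consumptions precisely for well-notched transcripts that contain incorrect answers. Lemma~\ref{lem:queries} says nothing about what a cheater cites; only the grammar constrains that.

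The repair is to run the case analysis over the rules rather than over the truth. In Table~\ref{tab:grammar}, only \rl{1} has two births and only \rl{3} has two citations; every other rule consumes one triple and rebirths one. A notch matching no rule lowers the flag and stops at step~2 of Figure~\ref{fig:auditor}, so steps 5--6 only ever execute for a matched rule. Once the flag is lowered, later events are no-ops. Hence every event contributes at most two births and at most two consumptions, whatever the maintainer does. Your summation, plus the initial birth of $\igap{0}{U+1}{0}$, then finishes the proof.

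Two smaller points. Consumption of a citation is imposed by step~5 of the routine, not by the temporal rule \rl{0}, which only constrains code values. Your closing worry about queries also disappears, since the grammar makes every query rule consume and rebirth its citation. Finally, your ``sanity check'' that births minus run-time consumptions equals $n_T+1$ holds only for honest runs.
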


\section{Combinatorial normal forms}\label{sec:normalforms}

The preceding lemmas already determine the protocol, but the discrete
structure behind them is useful to state explicitly.  The gaps of a set are
not an arbitrary collection of intervals: they are a unit path in a finite
acyclic interval graph.  This section records the path normal form and the
uniqueness of the two update rewrites.  These statements are the part of the
paper that belongs most directly to discrete mathematics; the field
fingerprint of Section~\ref{sec:discipline} is only a compact way to audit
this finite path calculus.

\begin{definition}[Interval graph and boundary]\label{def:intervalgraph}
Let $\I_U$ be the directed acyclic graph with vertex set
$\{0,1,\dots,U+1\}$ and one directed edge $a\to b$ for every $a<b$.
We identify the edge $a\to b$ with the interval $\gap{a}{b}$.  For a finite
set $H$ of such edges define its discrete boundary
\[
  \partial H(v)=\outdeg_H(v)-\indeg_H(v),
  \qquad v=0,1,\dots,U+1 .
\]
The support $\supp(H)$ is the set of endpoints incident to some edge of $H$.
\end{definition}

\begin{proposition}[Incidence characterization of chains]\label{prop:incidence}
For a finite set $H$ of intervals over $\{0,\dots,U+1\}$, the following are
equivalent.
\begin{enumerate}[label=(\roman*),leftmargin=2.4em,itemsep=0.1em]
\item $H$ is an interlocking chain from $0$ to $U+1$.
\item $H$ is a $\{0,1\}$-valued unit flow in $\I_U$ with boundary
\[
  \partial H=\mathbf{1}_{0}-\mathbf{1}_{U+1} .
\]
Equivalently, $\partial H(0)=1$, $\partial H(U+1)=-1$, and
$\partial H(v)=0$ for every interior vertex $v$.
\end{enumerate}
Consequently, checking that a disclosed family meshes is the same as checking
one finite incidence-vector identity in the interval acyclic graph.
\end{proposition}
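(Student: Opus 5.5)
The plan is to prove the two implications separately. Throughout, I use only that every edge of $\I_U$ goes strictly upward, from $a$ to some $b>a$. The $\{0,1\}$-valuedness in (ii) is automatic, since $H$ is a \emph{set} of edges. So (ii) really says only that $H$ has the prescribed boundary.

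\textbf{(i)$\Rightarrow$(ii).} List the chain as $\gap{a_1}{b_1},\dots,\gap{a_k}{b_k}$ with $a_1=0$, $b_i=a_{i+1}$, $b_k=U+1$. Since $a_i<b_i=a_{i+1}$, the vertices $a_1<a_2<\dots<a_k<b_k$ are strictly increasing and hence distinct. Each interior vertex $a_{i+1}=b_i$ is the tail of exactly one edge and the head of exactly one edge, so its boundary is $0$. The vertex $0$ is a tail once and never a head, and $U+1$ is a head once and never a tail. Every other vertex is untouched. This gives $\partial H=\mathbf{1}_0-\mathbf{1}_{U+1}$, a direct count.

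\textbf{(ii)$\Rightarrow$(i).} Here I extract a path and then show that nothing is left over.

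First, no edge enters $0$, so $\indeg_H(0)=0$ and $\partial H(0)=1$ forces $\outdeg_H(0)=1$. Let $0\to v_1$ be that edge. Whenever the walk reaches a vertex $v\neq U+1$, it has entered $v$, so $\indeg_H(v)\ge1$. Since $\partial H(v)=0$, this gives $\outdeg_H(v)\ge1$, and we continue along some outgoing edge.

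The walk strictly increases, so it never repeats an edge and must terminate at $U+1$. Call the resulting edge set $P$. By the forward direction already proved, $P$ is an interlocking chain with $\partial P=\mathbf{1}_0-\mathbf{1}_{U+1}$.

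The main step, and the only nonroutine one, is to show $H=P$. Let $R=H\setminus P$. By linearity of $\partial$, we have $\partial R=0$ everywhere. Suppose $R\neq\emptyset$, and let $v$ be the smallest vertex incident to an edge of $R$. Every edge of $R$ at $v$ must leave $v$, because an entering edge would come from a smaller vertex. Hence $\partial R(v)=\outdeg_R(v)>0$, a contradiction. So $R=\emptyset$ and $H=P$ is an interlocking chain.

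This is the standard "acyclic circulations vanish" argument. The only point needing care is making sure the walk uses edges of $H$ without reuse, which strict monotonicity guarantees.

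The closing sentence of the proposition then follows by combining the equivalence with Lemma~\ref{lem:tiling}(ii): a disclosed family meshes iff it is an interlocking chain iff its incidence vector equals $\mathbf{1}_0-\mathbf{1}_{U+1}$.
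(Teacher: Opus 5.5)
Your proof is correct and follows the paper's argument essentially step for step. You give a direct degree count for (i)$\Rightarrow$(ii), then a strictly increasing walk from $0$ that extracts a path $P\subseteq H$ ending at $U+1$, and finally the smallest-vertex argument showing that the zero-boundary remainder $H\setminus P$ in the acyclic graph $\I_U$ must be empty. (For the closing sentence, ``meshes'' is by Definition~\ref{def:chain} just another name for being an interlocking chain, so the appeal to Lemma~\ref{lem:tiling}(ii) is unnecessary, though harmless.)
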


\begin{proof}
If $H=\{\gap{s_i}{s_{i+1}}:0\le i\le k-1\}$ with
$0=s_0<\cdots<s_k=U+1$, then $0$ has one outgoing and no incoming edge,
$U+1$ has one incoming and no outgoing edge, and every intermediate $s_i$ has
one incoming and one outgoing edge; vertices not in the support have neither.
Thus (i) implies (ii).

Conversely assume (ii).  Regard the indicator of $H$ as an integral flow of
value one from $0$ to $U+1$.  Because all edges of $\I_U$ point from smaller
to larger vertices, the induced graph is acyclic.  Starting at $0$, there is
one outgoing unit of flow; following any positive-flow edge and using
conservation at each interior vertex must eventually reach $U+1$, since an
acyclic walk cannot continue forever and no other vertex has negative
boundary.  This gives a directed $0$-to-$U+1$ path $P$ contained in $H$.
Subtract the indicator of $P$ from the indicator of $H$.  The remaining edge
set, if nonempty, would be a nonnegative integral circulation in an acyclic
directed graph: its boundary is zero at every vertex.  But any nonempty
finite acyclic directed graph has a smallest vertex incident to an edge with
no incoming remaining edge; at that vertex the remaining boundary is positive,
a contradiction.  Hence no edge remains, so $H=P$, an interlocking chain.
\end{proof}

\begin{theorem}[One-vertex normal form]\label{thm:normalform}
Let $C$ be an interlocking chain and let $V(C)$ be its ordered endpoint set.
\begin{enumerate}[label=(\alph*),leftmargin=2.4em,itemsep=0.1em]
\item If $x\notin V(C)$, there is a unique interlocking chain with endpoint
set $V(C)\cup\{x\}$: if $\gap{a}{b}$ is the unique edge of $C$ with
$a<x<b$, it is
\[
  C^+_x=(C\setminus\{\gap{a}{b}\})\cup
  \{\gap{a}{x},\gap{x}{b}\}.
\]
\item If $x\in V(C)\setminus\{0,U+1\}$, there is a unique interlocking
chain with endpoint set $V(C)\setminus\{x\}$: if
$\gap{a}{x},\gap{x}{b}$ are the two edges of $C$ incident to $x$, it is
\[
  C^-_x=(C\setminus\{\gap{a}{x},\gap{x}{b}\})\cup
  \{\gap{a}{b}\}.
\]
\end{enumerate}
Thus insertion and deletion are not merely local operations that happen to
work; they are the only possible one-key refinements and coarsenings of the
chain normal form.
\end{theorem}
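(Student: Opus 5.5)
The plan is to reduce both parts to the fact that an interlocking chain is determined by its endpoint set, and then to identify the displayed rewrite via Lemma~\ref{lem:rewrites}.

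\textbf{Step 1: a chain is determined by its endpoints.} By Definition~\ref{def:chain}, if $H$ is an interlocking chain with pairs $\gap{a_1}{b_1},\dots,\gap{a_k}{b_k}$, then its endpoint set is $V(H)=\{a_1,\dots,a_k,b_k\}$. This set lists as $0=a_1<a_2<\dots<a_k<b_k=U+1$, and $H$ consists exactly of the consecutive pairs of this list. So $H$ is the unique chain with endpoint set $V(H)$. Put differently, $V(H)=\overline{S}$ for the state $S$ that Lemma~\ref{lem:tiling}(iii) attaches to $H$, and $H=\Gaps(S)$. This already gives uniqueness in both (a) and (b). Any interlocking chain $C'$ with endpoint set $W$, where $W$ contains $0$ and $U+1$, must equal $\Gaps(W\setminus\{0,U+1\})$.

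\textbf{Step 2: existence and the explicit formula.} Write $C=\Gaps(S)$ with $S=V(C)\setminus\{0,U+1\}$, using Lemma~\ref{lem:tiling}.

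For (a), suppose $x\notin V(C)$. Then $x\in\uni$ and $x\notin S$. Lemma~\ref{lem:rewrites}(i) gives a unique gap $\gap{a}{b}$ with $a<x<b$, and it shows that $\Gaps(S\cup\{x\})$ equals the displayed set $C^+_x$. Since $\Gaps(S\cup\{x\})$ is an interlocking chain with endpoint set $V(C)\cup\{x\}$, it is the chain required.

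For (b), suppose $x\in V(C)\setminus\{0,U+1\}$. Then $x\in S$. Lemma~\ref{lem:rewrites}(ii) supplies the two incident gaps $\gap{a}{x},\gap{x}{b}$, and it identifies $\Gaps(S\setminus\{x\})$ with $C^-_x$. That chain has endpoint set $V(C)\setminus\{x\}$, which still contains both sentinels, so it is the chain required.

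\textbf{Expected obstacle.} The only delicate point is the bookkeeping behind Step 1: the endpoint set of a chain must be exactly the sorted vertex list $\overline{S}$, with no repetitions and no stray endpoints. This follows from strict increase along the chain, $a_i<b_i=a_{i+1}$. I would state it explicitly, because the uniqueness claim in the theorem depends on it. One could alternatively argue uniqueness through Proposition~\ref{prop:incidence}, by showing that the path structure forces each edge to join consecutive support vertices. The direct argument above is shorter, so I would use it.
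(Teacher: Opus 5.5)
Your proposal is correct and follows essentially the same route as the paper: both rest on the fact that an interlocking chain must join consecutive elements of its sorted endpoint set, which gives uniqueness, and both read off the displayed rewrite from the single straddling or incident pair. The only cosmetic difference is that you obtain existence and the explicit formula through Lemma~\ref{lem:tiling} and Lemma~\ref{lem:rewrites}, whereas the paper reads them directly from the sorted list $0=s_0<\cdots<s_k=U+1$ and cites Proposition~\ref{prop:incidence} only as an alternative uniqueness argument.
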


\begin{proof}
Write the endpoint set of $C$ as
$0=s_0<s_1<\cdots<s_k=U+1$.  If $x\notin V(C)$, there is a unique index
$i$ with $s_i<x<s_{i+1}$; any chain whose endpoints are
$V(C)\cup\{x\}$ must connect consecutive endpoints in their sorted order,
so the only changed edge is
$\gap{s_i}{s_{i+1}}$, replaced by
$\gap{s_i}{x}$ and $\gap{x}{s_{i+1}}$.  The deletion case is the same
argument with $x=s_i$ for some $0<i<k$: consecutive endpoints on the two
sides of $x$ must become consecutive after $x$ is removed.  Uniqueness also
follows from Proposition~\ref{prop:incidence}, because the prescribed
endpoint set admits only one unit path.
\end{proof}

\begin{corollary}[Uniqueness of local witnesses]\label{cor:witnessuniq}
For every state $S$ and every operation of Definition~\ref{def:semantics},
the gap or gap pair that can justify the correct branch is unique.  Moreover,
a state-preserving query can only cite the unique chain edge selected by the
corresponding inequality in Lemma~\ref{lem:queries}; an effective update can
only perform the refinement or coarsening of Theorem~\ref{thm:normalform}.
\end{corollary}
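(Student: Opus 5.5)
The plan is to prove Corollary~\ref{cor:witnessuniq} by cases on the operation, using Lemma~\ref{lem:tiling} to identify the current state with the interlocking chain $\Gaps(S)$, and Lemma~\ref{lem:queries} together with Theorem~\ref{thm:normalform} to isolate the witness in each case. Throughout I read ``the gap or gap pair that can justify the correct branch'' as the gaps of $\Gaps(S)$ whose defining inequality, from Lemma~\ref{lem:queries} or Lemma~\ref{lem:rewrites}, certifies the correct answer. Uniqueness then always reduces to two facts: in an interlocking chain the left endpoints are distinct, and so are the right endpoints; and the chain intervals $(a,b]$ partition $(0,U+1]$, as do the intervals $[a,b)$, which partition $[0,U+1)$.

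First, the state-preserving queries. For $\op{pred}(x)$, Lemma~\ref{lem:queries}(b) gives exactly one gap with $a<x\le b$, and its left endpoint is forced to be the correct answer, with \ans{none} exactly when $a=0$. The case $\op{succ}(x)$ is the symmetric one, with $a\le x<b$. For $\op{min}$ and $\op{max}$, Lemma~\ref{lem:queries}(c) supplies the unique gap at $0$, respectively at $U+1$. For $\op{mem}(x)$ the two branches are mutually exclusive:
\begin{itemize}[leftmargin=2em,itemsep=0.1em]
\item if $x\in S$, the witness is the pair $\gap{a}{x},\gap{x}{b}$, unique because right and left endpoints are each distinct;
\item if $x\notin S$, the witness is the single straddling gap with $a<x<b$, unique by Lemma~\ref{lem:rewrites}(i).
\end{itemize}
If the intended witness for presence is a single gap rather than the pair, uniqueness follows from the same distinctness. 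The ineffective updates, $\op{ins}(x)$ with $x\in S$ and $\op{del}(x)$ with $x\notin S$, reuse exactly these membership witnesses.

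Next, the effective updates. For $\op{ins}(x)$ with $x\notin S$, any chain describing the new state $S\cup\{x\}$ has endpoint set $V(\Gaps(S))\cup\{x\}$. By Theorem~\ref{thm:normalform}(a) that chain must be $C^+_x$, so the consumed gap is forced to be the straddling $\gap{a}{b}$ and the created gaps are forced to be $\gap{a}{x}$ and $\gap{x}{b}$. Deletion is handled the same way via Theorem~\ref{thm:normalform}(b): the consumed pair is the two gaps incident to $x$, and the created gap is $\gap{a}{b}$.

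The main obstacle is not any computation but making ``can justify'' precise before the discipline of Section~\ref{sec:discipline} is defined. I would state the corollary relative to the gap family $\Gaps(S)$ alone. Then the last sentence of the statement, that a state-preserving query can only cite the selected chain edge, holds because any cited edge must belong to $\Gaps(S)$ and satisfy the branch's inequality. The uniqueness clauses of Lemma~\ref{lem:queries} and Theorem~\ref{thm:normalform} then close each case.
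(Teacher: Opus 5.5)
Your proposal is correct and follows the paper's own route: you identify $S$ with the unit path $\Gaps(S)$ via Lemma~\ref{lem:tiling}, then read the uniqueness of each witness off Lemma~\ref{lem:queries} for queries and off Theorem~\ref{thm:normalform} for effective updates. Your case-by-case write-up spells out in detail what the paper's short proof states in one paragraph. That includes your hedge that the presence witness may be the single gap $\gap{x}{b}$, as in rules \rl{2} and \rl{5}.
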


\begin{proof}
By Lemma~\ref{lem:tiling}, $\Gaps(S)$ is the unit path whose interior
vertices are exactly $S$.  Membership in $S$ is therefore endpoint membership
in that path, absence is strict containment in one path edge, predecessor and
successor are the two one-sided containment queries of the same ordered path,
and extrema are the two sentinel-anchored edges.  The uniqueness statements
are exactly Lemma~\ref{lem:queries} plus the path-refinement normal form of
Theorem~\ref{thm:normalform}.
\end{proof}

\FloatBarrier
\section{The indenture discipline}\label{sec:discipline}

The auditor cannot afford to remember the gap family, so it remembers a
fingerprint of its \emph{history}: every gap, when created, is stamped with a
fresh code; every operation must \emph{cite} the codes of the gaps it
rewrites or reads; and the two sides of the ledger---births and
consumptions---are folded into two field accumulators.  This section defines
the discipline purely behaviorally; no structure that might implement it is
mentioned, and none needs to be.  Any maintainer whose notches obey the rules
is treated identically.

\subsection{Codes and the public clock}

The auditor keeps a counter $\kap$, initially $0$.  Whenever a gap is
created---we say \emph{born}---it receives the current value of $\kap$ as its
\emph{code}, and $\kap$ increments.  Before the first operation, the gap
$\gap{0}{U+1}$ of the empty state is born with code $0$.  Within an event,
all cited codes are checked first, against the value $\kap_{\mathrm{start}}$
that $\kap$ had when the event began; births then occur in the fixed order
prescribed by the matching rule below.  An \emph{indentured gap} is a triple
$\igap{a}{b}{c}$: a gap together with its code.

\begin{lemma}[Public clock]\label{lem:clock}
The code assigned to every birth is a deterministic function of the public
transcript prefix.  Consequently any party reading the tally---in particular
a maintainer---can reproduce all code assignments exactly, with $O(1)$
bookkeeping per event and no communication from the auditor.
\end{lemma}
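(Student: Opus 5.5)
The plan is an induction on the number of events in the public transcript prefix, with the invariant that the value of $\kap$ after each event is a function of that prefix alone. The base case is the empty prefix. Before any operation, $\kap=0$, the initial gap $\gap{0}{U+1}$ is born with code $0$, and $\kap$ becomes $1$. None of this depends on anything beyond the fixed parameter $U$, which is public.

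For the inductive step, suppose $\kap_{\mathrm{start}}$ is determined by the prefix before event $t$. The number of births in event $t$ and their order are fixed by the matching rule applied to the visible triple $(o_t,\textit{answer}_t,\textit{notch}_t)$. Corollary~\ref{cor:traffic} bounds this number by two, since an insertion splits one gap into two, a deletion merges two into one, and a query rewrites nothing. The codes issued in event $t$ are therefore $\kap_{\mathrm{start}},\kap_{\mathrm{start}}+1,\dots$ in that prescribed order, and $\kap$ ends at $\kap_{\mathrm{start}}$ plus the birth count. Each quantity here is computed from $\kap_{\mathrm{start}}$ and public data. The citation checks that precede the births compare cited codes with $\kap_{\mathrm{start}}$; they do not influence which codes are issued, only whether the flag is raised. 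The same holds for the disclosure phase.

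The second sentence of the lemma follows at once. A reader of the tally keeps its own copy of $\kap$ and, per event, reads the triple, determines the birth count from the operation and its announced answer (or notch), and assigns consecutive codes. This is $O(1)$ work and uses no data supplied by the auditor. In particular, the secret $z$ never enters the computation.

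The main obstacle is that the matching rule is only referred to at this point, not yet stated. I therefore have to argue that the number and order of births in an event are read off the visible triple alone, and do not depend on the true state $S_t$, which the reader cannot know. The way I would settle this is to observe that the births are dictated by the announced branch: \ans{inserted} gives two births, \ans{deleted} gives one, and every other answer gives none. This holds whether or not the announced answer is correct. So even a dishonest transcript yields deterministic code assignments, and that is exactly what later lets the transcript be treated as independent of $z$.
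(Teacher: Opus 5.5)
\paragraph{Verdict.} Your skeleton matches the paper's one-line proof: the counter starts at a fixed value, it increments only at births, and the number and order of births per event are fixed by public data, so induction gives determinism. The gap is the concrete birth table you use to settle the ``main obstacle'', and your reproduction recipe for the second sentence rests on it. That table is wrong.

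\paragraph{The birth counts.} In Table~\ref{tab:grammar}, every rule other than \rl{1} has exactly one birth:
\begin{itemize}
\item the queries \rl{5}--\rl{8$'$}, the duplicate insertion \rl{2} and the absent deletion \rl{4} consume the cited gap and rebirth it under a fresh code;
\item \rl{3} births the merged gap.
\end{itemize}
``A query rewrites nothing'' is true of $\Gaps(S)$ but false of the ledger. Every citation consumes its triple, so a touched gap must be reborn to stay live. This is also why your appeal to Corollary~\ref{cor:traffic} gets the right bound for the wrong reason.

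\paragraph{Why the error matters.} A reader following your recipe (\ans{inserted}~$\mapsto 2$, \ans{deleted}~$\mapsto 1$, otherwise $0$) desynchronizes its copy of $\kap$ at the first query, duplicate insertion, or absent deletion. Take Table~\ref{tab:honest}:
\begin{itemize}
\item event~3 ($\op{mem}(3)$) rebirths $\gap{2}{4}$ as $\igap{2}{4}{5}$;
\item your recipe keeps code~$4$, so a maintainer mirroring it cites $\igap{2}{4}{4}$ at event~4;
\item this passes the temporal rule but consumes code~$4$ twice and leaves code~$5$ orphaned;
\item the ledger no longer balances, and the maintainer of Theorem~\ref{thm:maintainer} loses perfect completeness.
\end{itemize}
Since that theorem relies on exactly the second sentence of the lemma, this is not cosmetic.

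\paragraph{Two smaller corrections.}
\begin{itemize}
\item Births are keyed to the matched rule, meaning the operation together with the notch shape matched in step~2 of Figure~\ref{fig:auditor}, not to the announced answer. Births occur in step~6, before the answer comparison of step~7. The two readings coincide only on well-notched events.
\item The checks do affect which codes are issued. A failed check lowers the flag, and from then on the auditor processes nothing and issues no further codes. Determinism survives only because those checks never read $z$ and are themselves functions of the public transcript.
\end{itemize}

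\paragraph{Fix.} Replace your table with ``two births under \rl{1}, one under every other rule, in the order listed, determined by operation and notch shape.'' With that change your induction is exactly the paper's argument.
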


\begin{proof}
The counter starts at a fixed value, increments only at births, and the
number and order of births in each event are fixed by the operation and the
notch shape, both public.
\end{proof}

\subsection{Encoding}

Codes, endpoints, and gaps are folded into a prime field through one
injective map.

\begin{definition}[Encoding]\label{def:encoding}
Fix a prime $p$.  For $a,b\in\{0,\dots,U+1\}$ and a code $c\ge 0$ define
\[
\nuf(a,b,c)\;=\;1+c\,(U+2)^2+a\,(U+2)+b \pmod p .
\]
\end{definition}

\begin{lemma}[Injectivity]\label{lem:injective}
If $p>(2T+2)(U+2)^2$ then $\nuf$ is injective on
$\{0,\dots,U+1\}^2\times\{0,\dots,2T+1\}$, which contains every triple that
can arise in a session of $T$ operations.
\end{lemma}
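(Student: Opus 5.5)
The plan is to show that the integer $N(a,b,c)=c\,(U+2)^2+a\,(U+2)+b$, before the shift by $1$ and the reduction mod $p$, is already injective on the stated domain. Then we check that every value of $1+N$ lies in $\{1,\dots,p-1\}$, so reducing mod $p$ cannot identify two distinct values. Separately, we argue that every triple arising in a session of $T$ operations lies in $\{0,\dots,U+1\}^2\times\{0,\dots,2T+1\}$.

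\textbf{Injectivity over the integers.} Since $a,b\in\{0,\dots,U+1\}$, both are digits in base $U+2$. The triple $(c,a,b)$ is therefore recovered from $N$ as $b=N\bmod (U+2)$, $a=\lfloor N/(U+2)\rfloor\bmod(U+2)$, and $c=\lfloor N/(U+2)^2\rfloor$. This is just uniqueness of mixed-radix representation, so $N$ is injective.

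\textbf{Range bound.} The maximum of $N$ on the domain is
\[
(2T+1)(U+2)^2+(U+1)(U+2)+(U+1)=(2T+2)(U+2)^2-1 .
\]
Hence $1\le 1+N\le (2T+2)(U+2)^2<p$. Two distinct triples thus give distinct integers in $[1,p-1]$, and these stay distinct mod $p$. The shift by $1$ also keeps every encoding nonzero, which will matter later for the product accumulators but is not needed here.

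\textbf{Containment of session triples.} Endpoints always lie in $\{0,\dots,U+1\}$ by Definition~\ref{def:gaps}. For codes, Corollary~\ref{cor:traffic} gives at most $2T+1$ births, counting the initial gap with code $0$. Since codes are assigned consecutively by the public counter $\kap$ starting at $0$ (Lemma~\ref{lem:clock}), they lie in $\{0,\dots,2T\}\subseteq\{0,\dots,2T+1\}$. The extra slack presumably covers code values cited or checked against $\kap_{\mathrm{start}}$, including at disclosure.

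\textbf{Expected obstacle.} The main subtlety is this last step. Cited codes come from an adversarial maintainer, so we cannot assume they are honest. One relies on the timestamp check against $\kap$ (cited codes must satisfy $c<\kap_{\mathrm{start}}\le 2T+1$), which I expect the discipline to enforce, to keep all encoded triples inside the domain. Any notch citing a code outside this range would already be rejected before encoding.
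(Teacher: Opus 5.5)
Your proposal is correct and follows the paper's proof essentially step for step: the base-$(U+2)$ digit decomposition gives injectivity over the integers, the bound $1+N\le(2T+2)(U+2)^2<p$ rules out any identification modulo $p$, and Corollary~\ref{cor:traffic} bounds the codes. Your extra remark that adversarially cited codes are confined by rule \rl{0} and the disclosure bound $c_i<\kap\le 2T+1$ makes explicit what the paper's one-line appeal to the traffic corollary leaves implicit; note that endpoints of adversarial citations are confined by the grammar's check $0\le a<b\le U+1$, not by Definition~\ref{def:gaps}.
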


\begin{proof}
For $a,b\le U+1$ one has $0\le a(U+2)+b\le (U+1)(U+2)+(U+1)=(U+1)(U+3)<
(U+2)^2$, so the integer $1+c(U+2)^2+a(U+2)+b$ determines $c$ as a quotient
and $(a,b)$ as a remainder: distinct triples give distinct integers.  All
such integers lie in $[1,(2T+2)(U+2)^2]\subseteq[1,p-1]$, so reduction
modulo $p$ identifies nothing.  Corollary~\ref{cor:traffic} bounds codes by
$2T+1$.
\end{proof}

\begin{remark}[A concrete instantiation]\label{rem:prime}
On a $64$-bit machine take $p=2^{61}-1$.  Then
Lemma~\ref{lem:injective} holds whenever $(2T+2)(U+2)^2<2^{61}-1$, e.g.\ for
every $U\le 10^6$ and $T\le 10^6$ simultaneously, and every quantity in the
scheme fits in one word with single-word modular arithmetic.  The choice is
parametric: nothing below depends on it beyond the inequality.
\end{remark}

\subsection{The stock and the notch grammar}

\begin{definition}[Stock]\label{def:stock}
The auditor's private state, the \emph{stock}, consists of five words and a
flag: a uniformly random secret $z\in\F_p$ drawn before the session; two
accumulators $\acc,\rel\in\F_p$, both initialized to $1$; the public counter
$\kap$; the event index $t$; and a one-bit flag, initially \textsf{ok}.
On the birth of $\igap{a}{b}{c}$ the auditor updates
$\acc\leftarrow\acc\cdot(z-\nuf(a,b,c))$;
on the consumption of $\igap{a}{b}{c}$ it updates
$\rel\leftarrow\rel\cdot(z-\nuf(a,b,c))$.
\end{definition}

Each operation must be accompanied by a notch matching one of the rules of
Table~\ref{tab:grammar}.  Each rule prescribes the notch's fields, a local
predicate the auditor evaluates, the citations consumed, the births (in
order), and the unique answer the citation forces; the auditor rejects---%
permanently lowering the flag---on any malformed notch, failed predicate,
violated temporal rule, or announced answer differing from the forced one.
The operation's key $x$ is part of the operation itself, so left-anchored
citations \rl{2}, \rl{5} and the extremal citations \rl{8}, \rl{8$'$} omit
the implied endpoint, which is how membership confirmations cost two words.

\begin{table}[H]
\centering\small
\begin{tabular}{@{}llllll@{}}
\toprule
Rule & Operation, branch & Notch & Predicate & Consumes & Births (order)\\
\midrule
\rl{1} & $\op{ins}(x)$, fresh   & $\langle a,b,c\rangle$ & $a<x<b$ & $\igap{a}{b}{c}$ & $\gap{a}{x}$, $\gap{x}{b}$\\
\rl{2} & $\op{ins}(x)$, duplicate & $\langle b,c\rangle$ & $x<b$ & $\igap{x}{b}{c}$ & $\gap{x}{b}$\\
\rl{3} & $\op{del}(x)$, present & $\langle a,c_1,b,c_2\rangle$ & $a<x<b$ & $\igap{a}{x}{c_1},\,\igap{x}{b}{c_2}$ & $\gap{a}{b}$\\
\rl{4} & $\op{del}(x)$, absent  & $\langle a,b,c\rangle$ & $a<x<b$ & $\igap{a}{b}{c}$ & $\gap{a}{b}$\\
\rl{5} & $\op{mem}(x)$, present & $\langle b,c\rangle$ & $x<b$ & $\igap{x}{b}{c}$ & $\gap{x}{b}$\\
\rl{6} & $\op{mem}(x)$, absent  & $\langle a,b,c\rangle$ & $a<x<b$ & $\igap{a}{b}{c}$ & $\gap{a}{b}$\\
\rl{7} & $\op{pred}(x)$ & $\langle a,b,c\rangle$ & $a<x\le b$ & $\igap{a}{b}{c}$ & $\gap{a}{b}$\\
\rl{7$'$} & $\op{succ}(x)$ & $\langle a,b,c\rangle$ & $a\le x<b$ & $\igap{a}{b}{c}$ & $\gap{a}{b}$\\
\rl{8} & $\op{min}$ & $\langle b,c\rangle$ & --- & $\igap{0}{b}{c}$ & $\gap{0}{b}$\\
\rl{8$'$} & $\op{max}$ & $\langle a,c\rangle$ & --- & $\igap{a}{U{+}1}{c}$ & $\gap{a}{U{+}1}$\\
\bottomrule
\end{tabular}
\caption{The notch grammar.  Forced answers: \rl{1} \ans{inserted};
\rl{2}, \rl{5} \ans{present}; \rl{3} \ans{deleted}; \rl{4}, \rl{6}
\ans{absent}; \rl{7} answer $a$, read as \ans{none} when $a=0$; \rl{7$'$}
answer $b$, read as \ans{none} when $b=U+1$; \rl{8} answer $b$, read as
\ans{empty} when $b=U+1$; \rl{8$'$} answer $a$, read as \ans{empty} when
$a=0$.  Every cited or born endpoint pair must satisfy
$0\le a<b\le U+1$.}
\label{tab:grammar}
\end{table}

\begin{definition}[Temporal rule \rl{0}]\label{def:temporal}
Every cited code must satisfy $c<\kap_{\mathrm{start}}$, the counter value at
the beginning of the event.  In words: one may cite only what was already
issued before this event began.
\end{definition}

\begin{definition}[Disclosure \rl{9}]\label{def:disclosure}
At the audit, the maintainer appends a sequence of triples
$\igap{a_1}{b_1}{c_1},\dots,\igap{a_k}{b_k}{c_k}$.  The auditor rejects
unless: $k\le\kap$ (the \emph{length cap}); the pairs mesh
(Definition~\ref{def:chain}); and every $c_i<\kap$.  Each disclosed triple is
consumed into $\rel$.  The auditor then \textsf{accept}s iff the flag is
\textsf{ok} and $\acc=\rel$.
\end{definition}

\begin{figure}[H]
\centering
\fbox{\begin{minipage}{0.93\textwidth}\small
\textbf{Auditor's event routine} (state: $z,\acc,\rel,\kap,t$, flag).
On reading $(o,\textit{answer},\textit{notch})$, if the flag is lowered, do
nothing; else:
\begin{enumerate}[leftmargin=2em,itemsep=0.05em]
\item $t\leftarrow t+1$; \ $\kap_{\mathrm{start}}\leftarrow\kap$.
\item Match the notch against the rule for $o$ in
Table~\ref{tab:grammar}; on shape mismatch, lower the flag and stop.
\item Check the rule's predicate and the endpoint bounds
$0\le a<b\le U+1$ of every citation; on failure, lower the flag.
\item Check the temporal rule: every cited code $c<\kap_{\mathrm{start}}$;
on failure, lower the flag.
\item For each citation $\igap{a}{b}{c}$, in the rule's order:
$\rel\leftarrow\rel\cdot(z-\nuf(a,b,c))$.
\item For each birth $\gap{a}{b}$, in the rule's order:
$\acc\leftarrow\acc\cdot(z-\nuf(a,b,\kap))$; \ $\kap\leftarrow\kap+1$.
\item Compute the forced answer from the citation
(Table~\ref{tab:grammar}); if it differs from \textit{answer}, lower the
flag.
\end{enumerate}
\textbf{At the audit}: apply Definition~\ref{def:disclosure};
output \textsf{accept} iff the flag is \textsf{ok} and $\acc=\rel$.
\end{minipage}}
\caption{The auditor, in full.  Steps 1--4 and 7 never read $z$; only steps
5--6 and the final comparison do.}
\label{fig:auditor}
\end{figure}

Figure~\ref{fig:auditor} assembles the routine.  Two structural remarks, both
used later.  First, the flag is a function of the public transcript alone:
$z$ enters only the accumulators and the final equality, so nothing the
auditor could be observed doing mid-session depends on $z$.  Second, the
discipline never asks the auditor to know the state: every check is local to
one event, and the auditor's entire memory of history is the pair
$(\acc,\rel)$, the counter, and the index.

\begin{definition}[Well-notched; balanced]\label{def:wellnotched}
A transcript is \emph{well-notched} if it lowers no flag: every notch matches
its rule, every predicate and endpoint bound holds, the temporal rule holds,
every answer equals the forced answer, and the disclosure passes the length
cap, meshing, and code-bound checks.  For a well-notched transcript let
$\Births$ be the multiset of indentured gaps born (including the initial
$\igap{0}{U+1}{0}$) and $\Consumed$ the multiset consumed---during the run
and at disclosure.  The transcript is \emph{balanced} if
$\Births=\Consumed$ as multisets.
\end{definition}

\section{The audit theorem}\label{sec:audit}

Two deterministic lemmas carry the weight: balance forces every citation to
be \emph{live}, and live citations force every answer to be correct.  The
field then converts balance into one probable equality.

\begin{definition}[Ledger]\label{def:ledger}
Fix a well-notched transcript.  For $0\le t\le T$ let $\Live_t$ be the
multiset of indentured gaps born in events $0,\dots,t$ (event $0$ being the
initial birth) minus those consumed in events $1,\dots,t$.  A citation at
event $t$ is \emph{live} if the cited triple belongs to $\Live_{t-1}$.
\end{definition}

\begin{lemma}[Liveness]\label{lem:liveness}
In a well-notched, balanced transcript:
\begin{enumerate}[label=(\alph*),leftmargin=2.4em,itemsep=0.1em]
\item $\Births$ is a set: no two births share a code, and each code
determines its gap.
\item $\Consumed=\Births$ is likewise a set; in particular no indentured gap
is consumed twice, during the run or at disclosure.
\item Every citation in the run is live, and the disclosure equals
$\Live_T$ as a set.
\end{enumerate}
\end{lemma}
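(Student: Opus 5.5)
Part (a) comes straight from the clock, before balance is used. Every birth receives the current value of $\kap$, and $\kap$ increments at every birth, starting from the initial birth with code $0$. So the codes of the births are $0,1,\dots,\kap_{\mathrm{final}}-1$, each used exactly once. Hence $\Births$ contains no repeated triple, and the code determines the birth, in particular its gap. This holds for every well-notched transcript.

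For (b), balance gives $\Consumed=\Births$ as multisets. Since $\Births$ is a set, every element of $\Consumed$ has multiplicity one. Thus no indentured gap is consumed twice, whether both consumptions occur during the run, both at disclosure, or one of each. I will also record that every consumed triple is a genuine birth with matching endpoints: a citation $\igap{a}{b}{c}$ whose $(a,b)$ differs from the gap actually born with code $c$ is not in $\Births$, which would contradict $\Consumed=\Births$.

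For (c), take a citation $\igap{a}{b}{c}$ at event $t$. By (b) it lies in $\Births$, so it was born at some event $s$ with code $c$. The temporal rule \rl{0} gives $c<\kap_{\mathrm{start}}(t)$. Every code issued during event $t$ or later is at least $\kap_{\mathrm{start}}(t)$, so $s\le t-1$, counting the initial birth as event $0$. The triple is therefore born within events $0,\dots,t-1$. It is not consumed in any of events $1,\dots,t-1$, because the citation at $t$ is a consumption and consumptions are unique by (b). So it belongs to $\Live_{t-1}$ with multiplicity one.

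For the disclosure, I will compare multisets. $\Live_T$ equals $\Births$ minus the run consumptions, with no negative multiplicities since every run citation was live. The disclosure equals $\Consumed$ minus the run consumptions, and $\Consumed=\Births$. Hence the disclosure equals $\Live_T$ as multisets, and both are sets by (a). The step needing most care is the timing argument in (c): showing $s<t$ relies on codes being issued monotonically, and the within-event ordering matters here, since the check against $\kap_{\mathrm{start}}$ precedes the event's own births. Everything else is bookkeeping.
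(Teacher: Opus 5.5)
Your proposal is correct and follows essentially the same route as the paper: (a) from the strictly increasing public counter, (b) from the fact that a multiset equal to a set is a set, (c) from the temporal rule \rl{0} combined with uniqueness of consumption, and the disclosure claim from $\Live_T=\Births-\Consumed_{\mathrm{run}}=\Consumed_{\mathrm{disc}}$. The paper additionally runs the liveness argument for disclosure citations via the disclosure code bound, but your multiset comparison already proves the disclosure claim without it.
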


\begin{proof}
(a) Codes are assigned by a strictly increasing counter, one per birth, and
the gap of a birth is fixed at the moment the code is issued.  (b) Balance
says the multisets are equal; a multiset equal to a set is a set.  Thus each
code appears exactly once on each side, and the matched pairs carry the same
gap by (a).

(c) Consider a citation of $\igap{a}{b}{c}$ at event $t$, during the run or
at disclosure.  By (b) there is exactly one birth with code $c$, and it is the
birth of the very gap $\gap{a}{b}$.  By the temporal rule
(during the run) or the disclosure code bound (at the audit), the birth of
code $c$ occurred strictly before the citing event began, so the triple
entered the ledger before event $t$.  By (b) it is consumed exactly once in
the whole transcript, namely here; hence it had not been consumed earlier and
$\igap{a}{b}{c}\in\Live_{t-1}$.  For the final claim:
$\Live_T=\Births-\Consumed_{\mathrm{run}}$, and balance gives
$\Births-\Consumed_{\mathrm{run}}=\Consumed_{\mathrm{disc}}$, the disclosed
multiset.
\end{proof}

\begin{lemma}[Forcing]\label{lem:forcing}
In a well-notched, balanced transcript, for every $0\le t\le T$ the gaps of
the ledger are exactly the gaps of the true state,
\[
\{\gap{a}{b}:\igap{a}{b}{c}\in\Live_t\}\;=\;\Gaps(S_t)
\quad\text{(with all multiplicities equal to one),}
\]
and every announced answer in the transcript is correct.
\end{lemma}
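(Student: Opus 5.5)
We argue by induction on $t$, proving jointly that the ledger gaps at time $t$ coincide with $\Gaps(S_t)$, each with multiplicity one, and that the answer announced at event $t$ is correct. For the base case $t=0$, the ledger $\Live_0$ is the single triple $\igap{0}{U+1}{0}$. Its gap is $\Gaps(\emptyset)=\{\gap{0}{U+1}\}$ by Lemma~\ref{lem:tiling}.

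For the inductive step, assume $\{\gap{a}{b}:\igap{a}{b}{c}\in\Live_{t-1}\}=\Gaps(S_{t-1})$ with no repeats. By Lemma~\ref{lem:liveness}(c), every citation made at event $t$ lies in $\Live_{t-1}$. Hence every cited pair is a genuine gap of $S_{t-1}$. For rule \rl{3}, the two cited triples are distinct elements of the ledger, since their pairs $\gap{a}{x}\neq\gap{x}{b}$. We then go through Table~\ref{tab:grammar} rule by rule, each time combining the rule's predicate with Lemma~\ref{lem:queries}.
\begin{itemize}[leftmargin=2em]
\item \rl{1}, \rl{4}, \rl{6}: a live gap $\gap{a}{b}$ with $a<x<b$ certifies $x\notin S_{t-1}$ by part (a). So the forced answers \ans{inserted} and \ans{absent} are correct.
\item \rl{2}, \rl{5}, \rl{3}: a live gap with left endpoint $x$ certifies $x\in S_{t-1}$ by part (a). So \ans{present} and \ans{deleted} are correct.
\item \rl{7}, \rl{7$'$}: a live gap satisfying $a<x\le b$, resp.\ $a\le x<b$, is the unique such gap of part (b). So the forced answer is the true predecessor or successor, including the \ans{none} cases.
\item \rl{8}, \rl{8$'$}: a live gap anchored at a sentinel is the unique one of part (c). So the forced extremum or \ans{empty} is correct.
\end{itemize}

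Next we update the ledger. In the state-preserving rules (\rl{2}, \rl{4}--\rl{8$'$}), exactly one triple is consumed and a triple carrying the same pair but a fresh code is born. The gap set is therefore unchanged, and so is the state, since a correct non-effective branch leaves $S$ fixed. Multiplicity one is preserved because the consumed triple was present once and the pair is re-added once. In \rl{1}, the consumed gap is exactly the gap $\gap{a}{b}$ of Lemma~\ref{lem:rewrites}(i), and the two births are exactly the rewrite's new gaps. In \rl{3}, the two consumed gaps are the gaps incident to $x$, and the birth is $\gap{a}{b}$, matching Lemma~\ref{lem:rewrites}(ii). In both cases the new ledger gap set equals $\Gaps(S_t)$. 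Freshly born pairs cannot duplicate a surviving ledger pair, because $\Gaps(S_t)$ is a set of the right size $|S_t|+1$ (Lemma~\ref{lem:tiling}). Equivalently, we can compare multiset cardinalities: the ledger has $|S_{t-1}|+1$ elements plus the net change, which equals $|S_t|+1$.

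The main obstacle is the bookkeeping that ensures the ledger, as a multiset, never picks up spurious copies. This matters because liveness alone would allow a cited triple to be a second copy of a pair. Our plan is to carry multiplicity one inside the induction hypothesis and to use the cardinality count above, together with the fact that consumed triples really are removed: Lemma~\ref{lem:liveness}(b) says each triple is consumed once and only at the citing event. With that in place, the answer at each event is correct because the state reconstructed at $t-1$ is the true $S_{t-1}$ for the realized prefix, and the transcript's operations define $S_t$ from it.
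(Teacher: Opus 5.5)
Your proof is correct and follows essentially the same route as the paper's. Both argue by induction on $t$: Lemma~\ref{lem:liveness}(c) places every citation in $\Live_{t-1}$, hence in $\Gaps(S_{t-1})$; Lemma~\ref{lem:queries} then forces both the branch and the answer; and Lemma~\ref{lem:rewrites} gives the ledger update. Your cardinality count for multiplicity one, and your remark that the two citations of \rl{3} are distinct triples, spell out points the paper covers only by saying that cited triples leave the ledger and births carry fresh codes.
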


\begin{proof}
Induction on $t$.  At $t=0$, $\Live_0=\{\igap{0}{U+1}{0}\}$ and
$\Gaps(\emptyset)=\{\gap{0}{U+1}\}$.  Assume the claim after event $t-1$ and
consider event $t$ with operation $o_t$ on key $x$ and true prior state
$S=S_{t-1}$.  By Lemma~\ref{lem:liveness}(c) every gap cited at event $t$
lies in $\Live_{t-1}$, hence by induction in $\Gaps(S)$.

\emph{Branch forcing.}  Suppose $o_t=\op{ins}(x)$ and the notch matched
\rl{2}: the citation is a gap of $S$ with left endpoint $x$, so
$x\in S$ by Lemma~\ref{lem:queries}(a); the forced answer \ans{present} is
correct, the state is unchanged, and the rewrite---consume $\gap{x}{b}$,
rebirth $\gap{x}{b}$---leaves the ledger's gap family equal to
$\Gaps(S)=\Gaps(S_t)$.  If instead the notch matched \rl{1}: the citation is
a gap of $S$ with $a<x<b$, so $x\notin S$ by the same lemma (a gap contains
no element of $S$ strictly inside); the forced answer \ans{inserted} is
correct, and the rewrite implements Lemma~\ref{lem:rewrites}(i) exactly, so
the ledger's family becomes $\Gaps(S\cup\{x\})=\Gaps(S_t)$.  The maintainer
cannot choose the wrong branch: a live gap with left endpoint $x$ exists iff
$x\in S$, and a live gap strictly containing $x$ exists iff $x\notin S$.
Multiplicity one is preserved because cited triples leave the ledger and
births carry fresh codes.

$\op{del}(x)$ is symmetric through rules \rl{3}, \rl{4} and
Lemma~\ref{lem:rewrites}(ii): rule \rl{3} can only be satisfied by citing the
two gaps incident to $x$, which exist iff $x\in S$ and are unique by
Lemma~\ref{lem:rewrites}(ii); rule \rl{4} requires a containing gap, which
exists iff $x\notin S$.  $\op{mem}(x)$ uses \rl{5}, \rl{6} with the same
dichotomy and a state-preserving rewrite.

For $\op{pred}(x)$ under \rl{7}, the citation is a live gap with
$a<x\le b$; by Lemma~\ref{lem:queries}(b) exactly one gap of $S$ satisfies
this, and its left endpoint $a$ is precisely the correct answer, with $a=0$
exactly in the \ans{none} case---which is the forced reading.  The rewrite is
state-preserving.  $\op{succ}$, $\op{min}$, $\op{max}$ are identical through
Lemma~\ref{lem:queries}(b),(c); for the extrema the cited gap is the unique
one anchored at the relevant sentinel, and the sentinel value of the free
endpoint encodes emptiness.

Thus the answer at event $t$ is correct and the ledger again mirrors
$\Gaps(S_t)$, completing the induction; ranging over $t$ gives the second
claim.
\end{proof}

\begin{theorem}[Audit]\label{thm:audit}
Fix $U$, $T$, and a prime $p>(2T+2)(U+2)^2$.  The scheme of
Sections~\ref{sec:discipline}--\ref{sec:audit} is perfectly complete, and it
is $\varepsilon$-sound with
\[
\varepsilon\;\le\;\frac{4T+1}{p}
\]
against every maintainer strategy, with no bound on the maintainer's
computational power.  The auditor stores five words and a flag and performs
$O(1)$ word operations per event and per disclosed triple.
\end{theorem}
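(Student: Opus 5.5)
The proof has three parts: completeness, soundness, and the resource bounds.

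\textbf{Completeness.} Let the honest maintainer keep the true gap family of the current state, each gap tagged with its code. Lemma~\ref{lem:clock} lets it compute every code itself. At each event it cites the unique gap or gap pair selected by Lemma~\ref{lem:queries} and Lemma~\ref{lem:rewrites}, and emits the matching rule of Table~\ref{tab:grammar}. Each predicate then holds, and each forced answer is the correct one. Every cited triple was born in an earlier event and has not yet been consumed, so the temporal rule \rl{0} holds. The live family stays equal to $\Gaps(S_t)$, which has $|S_t|+1\le\kap$ members. At the audit the maintainer discloses exactly $\Live_T$. This family meshes by Lemma~\ref{lem:tiling}, respects the length cap, and has all codes below $\kap$. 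Every birth is consumed exactly once, either during the run or at disclosure, so $\Births=\Consumed$ as multisets. The two accumulators are therefore the same product of linear factors, and $\acc=\rel$ for every $z$.

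\textbf{Soundness.} Fix a maintainer strategy. If the flag is lowered, the auditor rejects. Otherwise the transcript is well-notched, and there are two cases.
\begin{itemize}
\item If the transcript is balanced, Lemma~\ref{lem:forcing} shows every answer is correct. So the event ``accept and incorrect'' forces the transcript to be well-notched and \emph{unbalanced}.
\item Define the polynomials $P_\acc(Z)=\prod_{\beta\in\Births}(Z-\nuf(\beta))$ and $P_\rel(Z)=\prod_{\gamma\in\Consumed}(Z-\nuf(\gamma))$. Lemma~\ref{lem:injective} applies to every triple that enters either product. Births have codes at most $\kap-1\le 2T$ by Corollary~\ref{cor:traffic}. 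Citations have codes below $\kap$ by \rl{0} and the disclosure bound. Endpoints lie in $\{0,\dots,U+1\}$ by the endpoint checks.
\item Because $\nuf$ is injective on these triples, unequal multisets give distinct multisets of roots. By unique factorization in $\F_p[Z]$, $P_\acc\ne P_\rel$ as monic polynomials. Their difference is nonzero of degree at most $\max(|\Births|,|\Consumed|)$.
\end{itemize}

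\textbf{Degree count.} We have $|\Births|\le 2T+1$ by Corollary~\ref{cor:traffic}. Consumption during the run is at most $2T$. The length cap bounds the disclosure by $k\le\kap\le 2T+1$. Hence $|\Consumed|\le 4T+1$, and this is where the stated constant comes from.

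\textbf{Independence.} This is the delicate step. The transcript, including the disclosure, is produced without any visible dependence on $z$: the flag checks in Figure~\ref{fig:auditor} never read $z$. So the transcript is a random variable independent of $z$. Condition on any fixed transcript that is well-notched and unbalanced. Then $P_\acc-P_\rel$ is a fixed nonzero polynomial of degree at most $4T+1$. By the Schwartz--Zippel bound in one variable, a uniform $z$ is one of its roots with probability at most $(4T+1)/p$. Averaging over transcripts gives $\varepsilon\le(4T+1)/p$.

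The main thing to verify carefully is that an adaptive maintainer really gains no information about $z$. One must also check that injectivity covers consumed triples whose codes may be arbitrary values below $\kap$, so that the cited codes stay in the range of Lemma~\ref{lem:injective}.

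\textbf{Resource bounds.} The auditor's state is exactly the stock of Definition~\ref{def:stock}: five words and a flag. Each event performs a constant number of comparisons and at most four field multiplications, and each disclosed triple costs one mesh check and one multiplication.
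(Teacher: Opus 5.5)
Your proposal is correct and follows the paper's proof essentially step for step. Completeness comes from honest citation of the unique gaps and the exact balance $\Births=\Consumed$. Soundness splits well-notched transcripts into the balanced case, handled by Lemma~\ref{lem:forcing}, and the unbalanced case, where a nonzero difference polynomial of degree at most $4T+1$ is evaluated at a $z$ independent of the transcript. Your explicit check that cited and disclosed codes also lie in the range of Lemma~\ref{lem:injective} is a detail the paper treats more briefly, but the argument is the same.
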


\begin{proof}
\emph{Completeness.}  The honest maintainer maintains the true state, cites
at each event the unique gap or gap pair that Lemmas~\ref{lem:rewrites}
and~\ref{lem:queries} prescribe, with the codes those gaps actually carry---%
reproducible by Lemma~\ref{lem:clock}---and at the audit discloses
$\Gaps(S_T)$ with its live codes.  Every predicate of
Table~\ref{tab:grammar} holds by construction, every announced answer is the
forced one, the temporal rule holds because cited gaps were born at earlier
events, the disclosure meshes by Lemma~\ref{lem:tiling}(ii) and has length
$n_T+1\le\kap$.  The transcript is balanced---every birth is eventually
consumed, in the run or at disclosure---so $\acc$ and $\rel$ are products
over identical multisets and agree for \emph{every} $z$.  Acceptance is
certain; no randomness is consumed by completeness.

\emph{Soundness.}  Fix a maintainer strategy.  Since the auditor emits
nothing during the session and its flag depends only on the public
transcript, the realized transcript $\tau$ is independent of $z$.  Condition
on $\tau$.  If $\tau$ is not well-notched, the flag is lowered and the
auditor rejects with probability one.  So assume $\tau$ well-notched, with
birth multiset $\Births$ and consumption multiset $\Consumed$ determined by
$\tau$.  Define the polynomial
\[
P(Z)\;=\;\prod_{\beta\in\Births}\bigl(Z-\nuf(\beta)\bigr)\;-\;
\prod_{\gamma\in\Consumed}\bigl(Z-\nuf(\gamma)\bigr)\;\in\;\F_p[Z],
\]
so that the auditor accepts iff $P(z)=0$.

If $\Births=\Consumed$ then $P\equiv 0$ and the auditor accepts; but by
Lemma~\ref{lem:forcing} every answer in $\tau$ is then correct, so this case
contributes nothing to the soundness event.  If $\Births\neq\Consumed$, then
by Lemma~\ref{lem:injective} the two products have different multisets of
roots in $\F_p$; both being products of monic linear factors, unique
factorization in $\F_p[Z]$ gives $P\not\equiv 0$.  Its degree is at most
$\max(|\Births|,|\Consumed|)$.  By Corollary~\ref{cor:traffic},
$|\Births|\le 2T+1$; consumptions during the run number at most $2T$ and the
disclosure is capped at $\kap\le 2T+1$ triples
(Definition~\ref{def:disclosure}), so $|\Consumed|\le 4T+1$.  A nonzero
polynomial of degree at most $4T+1$ has at most $4T+1$ roots, and $z$ is
uniform on $\F_p$ and independent of $\tau$, hence
\[
\Pr_z\bigl[\text{accept}\mid\tau\bigr]
=\Pr_z\bigl[P(z)=0\mid\tau\bigr]\le\frac{4T+1}{p}.
\]
Averaging over $\tau$ in the event that the transcript is incorrect (which,
as shown, forces $\Births\neq\Consumed$ whenever the flag survived) yields
the bound.  The cost accounting is immediate from
Figure~\ref{fig:auditor}.
\end{proof}

\begin{remark}[Calibration]\label{rem:calibration}
For a soundness target $2^{-\lambda}$ it suffices to take
$p\ge(4T+1)\,2^{\lambda}$, compatible with Remark~\ref{rem:prime} for all
practical $T$; e.g.\ $p=2^{61}-1$ gives error below $2^{-38}$ for
$T=10^6$.  The bound degrades only linearly in session length, so a single
audit may cover long histories.
\end{remark}

\begin{corollary}[Meshing]\label{cor:meshing}
Condition on acceptance and exclude the $(4T+1)/p$ failure event.  Then the
disclosure equals $\Gaps(S_T)$ together with its live codes: the foil
meshes with the truth necessarily, not merely syntactically.  The explicit
meshing check of Definition~\ref{def:disclosure} is therefore redundant for
soundness; it is retained because it rejects deterministically, without
reference to $z$, and because Section~\ref{sec:epochs} re-births the
disclosed triples and needs them to form a chain even in the failure event.
\end{corollary}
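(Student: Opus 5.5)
Outside the failure event, acceptance with $P(z)=0$ forces $P\equiv 0$. So the plan is to work in the regime where the realized transcript $\tau$ is well-notched and balanced, and read off the disclosure from the two deterministic lemmas. Concretely, in the proof of Theorem~\ref{thm:audit}, acceptance requires the flag to be \textsf{ok}, so $\tau$ is well-notched. If $\Births\neq\Consumed$, then $P\not\equiv 0$, and $P(z)=0$ is exactly the event whose probability was bounded by $(4T+1)/p$. Excluding that event, acceptance therefore implies $\Births=\Consumed$, i.e.\ the transcript is balanced.

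For a balanced, well-notched transcript, I would invoke Lemma~\ref{lem:liveness}(c): the disclosed multiset equals $\Live_T$ as a set, with no repeated triple. Next, Lemma~\ref{lem:forcing} at $t=T$ gives that the gap components of $\Live_T$ are exactly $\Gaps(S_T)$, each with multiplicity one. Combining the two, the disclosure is $\Gaps(S_T)$, each gap paired with the code it carries in the ledger, namely its live code. No appeal to the meshing check was made, which establishes the redundancy claim. The disclosure meshes simply because $\Gaps(S_T)$ is an interlocking chain by Lemma~\ref{lem:tiling}(ii). Equivalently, by Proposition~\ref{prop:incidence}, it is the unique unit path with boundary $\mathbf{1}_0-\mathbf{1}_{U+1}$.

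The remaining assertions are justifications rather than mathematics, and I would check them directly against the definitions. The meshing check of Definition~\ref{def:disclosure} uses only endpoints, so it is a function of $\tau$ alone and rejects without reference to $z$. Moreover, in the failure event $P(z)=0$ with $P\not\equiv 0$, Lemma~\ref{lem:forcing} gives no control over the disclosure. Only the explicit check then guarantees that the disclosed pairs form a chain, which is the property the later re-birth step needs.

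The main obstacle is purely bookkeeping: making precise that ``exclude the failure event'' means conditioning on $\tau$ and discarding exactly the $z$ for which $P(z)=0$ while $P\not\equiv 0$. This is needed so that the deterministic lemmas apply verbatim to every remaining accepted outcome. I would handle it by restating the case split of the soundness proof rather than re-deriving any probability bound.
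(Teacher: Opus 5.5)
Your proposal is correct and follows the paper's proof: outside the failure event, acceptance forces $\Births=\Consumed$, and then Lemma~\ref{lem:liveness}(c) identifies the disclosure with $\Live_T$ while Lemma~\ref{lem:forcing} identifies its gap family with $\Gaps(S_T)$, each with multiplicity one.

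One caution on the redundancy remark: your claim that no appeal to meshing was made is not quite accurate. Both lemmas are stated for well-notched transcripts, and your passage from $P\equiv 0$ to balance applies Lemma~\ref{lem:injective} to the disclosed triples, whose endpoints only the meshing check confines to $\{0,\dots,U+1\}$. To make redundancy rigorous, note that the two lemmas' proofs never use the meshing check, and that $P\equiv 0$ already forces $\Consumed_{\mathrm{run}}\subseteq\Births$ because run citations are range-checked in step~3, which is all the forcing induction needs.
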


\begin{proof}
Outside the failure event, acceptance implies balance, and
Lemmas~\ref{lem:liveness}(c) and~\ref{lem:forcing} identify the disclosure
with $\Live_T$ and its gap family with $\Gaps(S_T)$.
\end{proof}

\section{Necessity}\label{sec:necessity}

Each pillar of Theorem~\ref{thm:audit}---the randomness, its secrecy, and the
temporal rule---is now shown to be irreplaceable.  For the lower bounds we
allow the auditor an arbitrary notch language: the maintainer may append any
finite binary string per event, and the auditor is any machine over such
streams.

\begin{definition}[Deterministic auditor]\label{def:detauditor}
A \emph{deterministic auditor with $s$ bits of state} is a deterministic
automaton that reads the stream of (operation, answer, notch-string) records,
updates a state of $s$ bits per record, and outputs \textsf{accept} or
\textsf{reject} at an end-of-session audit as a function of its final state
and the audit records.  It is \emph{perfectly complete} if some maintainer
strategy always announces correct answers and is always accepted; it is
\emph{sound} if it never accepts an incorrect transcript.
\end{definition}

\begin{theorem}[Deterministic floor]\label{thm:floor}
Let $n\ge 1$ and $U=2n$.  Every deterministic auditor that is perfectly
complete and sound for sessions over $\uni=\{1,\dots,2n\}$ of length at
least $n+1$ satisfies
\[
s\;\ge\;\log_2\binom{2n}{n}\;\ge\;2n-\log_2(2n+1).
\]
\end{theorem}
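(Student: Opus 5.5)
The plan is a fooling-set (pigeonhole) argument on the auditor's state after an insertion phase, followed by a single query that distinguishes any two colliding states. Fix the honest maintainer strategy guaranteed by perfect completeness. For each $n$-subset $A\subseteq\{1,\dots,2n\}$, consider the session $\sigma_A$ that inserts the elements of $A$ in increasing order, using $n$ operations. Let $q_A$ be the auditor's state after the honest maintainer has processed $\sigma_A$; because the auditor and the honest maintainer are both deterministic, $q_A$ is well defined. The central claim is that $A\mapsto q_A$ is injective. That gives $2^s\ge\binom{2n}{n}$, so $s\ge\log_2\binom{2n}{n}$. The second inequality then follows because $\binom{2n}{n}$ is the largest of the $2n+1$ binomial coefficients $\binom{2n}{k}$, whose sum is $4^n$; hence $\binom{2n}{n}\ge 4^n/(2n+1)$.

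To prove injectivity, suppose $A\neq B$ but $q_A=q_B$. Pick $x\in A\setminus B$ and append the single operation $\op{mem}(x)$; this gives sessions of length $n+1$, which are in range. After $\sigma_A$, the honest maintainer answers \ans{present}, emits some notch string $\eta$, and at the audit emits records $D$. Since the honest maintainer is always accepted, the auditor, starting from $q_A$ and reading $(\op{mem}(x),\ans{present},\eta)$ followed by $D$, outputs \textsf{accept}.

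Now build a cheating maintainer for the session $\sigma_B\cdot\op{mem}(x)$. It behaves honestly during $\sigma_B$, so the auditor reaches $q_B=q_A$. It then announces \ans{present} together with notch $\eta$, and discloses $D$. The auditor's transition and verdict depend only on its current state and the records it reads, so it passes through exactly the same states as before and accepts. But $x\notin B$, so the answer \ans{present} is incorrect, and soundness is violated. The adversary is allowed to choose the operations, answers, notches, and disclosure, so this strategy is legitimate.

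The main obstacle is making sure that nothing other than the $s$-bit state carries information across the splice. In particular the audit decision, and any dependence on the time index, must be a function of the state and the later records alone. Definition~\ref{def:detauditor} provides this: the update is a function of state and record, and the verdict is a function of final state and audit records. I will state this explicitly. If the session length or the event index were external inputs, the construction would still be safe, because both sessions have the same length $n+1$. Choosing $A$ and $B$ of the same size $n$ keeps all lengths aligned, which is why the construction uses $\binom{2n}{n}$ and not $2^{2n}$.
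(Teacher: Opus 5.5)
Your proposal is correct and is essentially the paper's proof. Both use the same fooling set of sorted insertions of $n$-subsets, pigeonhole on the $s$-bit state, and a $\op{mem}$ suffix copied byte for byte from an honest run. The only difference is the direction of the splice, which is an immaterial mirror image: you move the honest \ans{present} answer for $x\in A\setminus B$ into world $B$, while the paper moves the honest \ans{absent} answer from world $Y$ into world $X$, where $x^\ast\in X$.
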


\begin{proof}
Fix the honest strategy $\Maint_0$ witnessing perfect completeness.  For
each $n$-element subset $X\subseteq\uni$ let $\pi_X$ denote the record
stream produced when the environment issues
$\op{ins}(x_1),\dots,\op{ins}(x_n)$, the elements of $X$ in increasing
order, and $\Maint_0$ answers and notches; let $\sigma(X)\in\{0,1\}^s$ be
the auditor's state after reading $\pi_X$.  Note first that the auditor
cannot have rejected during $\pi_X$: the stream extends to a complete honest
session (append a correct disclosure under $\Maint_0$), which perfect
completeness obliges the auditor to accept, and rejection is absorbing.

Suppose $2^s<\binom{2n}{n}$.  By pigeonhole there are distinct $n$-sets
$X\neq Y$ with $\sigma(X)=\sigma(Y)$.  Both have size $n$, so there exists
$x^\ast\in X\setminus Y$.  Consider the honest continuation in world $Y$:
the environment issues $\op{mem}(x^\ast)$, $\Maint_0$ answers
\ans{absent}---correct, since $x^\ast\notin Y$---with its honest notch, the
audit is called, and $\Maint_0$ produces its honest disclosure for $Y$.
Write $\rho$ for this entire suffix of records, byte for byte.  Started from
state $\sigma(Y)$, the auditor accepts $\pi_Y\rho$ by perfect completeness.

Now splice.  A cheating maintainer in world $X$ plays $\pi_X$ honestly, then
replays the byte string $\rho$: it announces \ans{absent} to
$\op{mem}(x^\ast)$ and copies world $Y$'s notch and disclosure verbatim.
The auditor is deterministic; from the merge point onward its state evolution
is a function of (current state, remaining bytes), and both are identical to
the world-$Y$ run---$\sigma(X)=\sigma(Y)$ and the suffixes coincide.  Hence
it accepts.  But in the spliced transcript the operation prefix is
$\op{ins}$ of all of $X$ followed by $\op{mem}(x^\ast)$ with
$x^\ast\in X$: the correct answer is \ans{present}, the announced answer is
\ans{absent}, and an incorrect transcript has been accepted, contradicting
soundness.  Therefore $2^s\ge\binom{2n}{n}$, and the numeric bound follows
from $\binom{2n}{n}\ge 4^{\,n}/(2n+1)$.
\end{proof}

\begin{corollary}[Visible coins]\label{cor:visible}
Let an auditor toss coins $\rho$ but reveal them to the maintainer before the
session (or, equivalently, act so that the maintainer can infer them).
Suppose it is perfectly complete---for every $\rho$, every honest session is
accepted---and keeps $s<\log_2\binom{2n}{n}$ bits of state beyond $\rho$.
Then its soundness error is $1$: some maintainer strategy is accepted with
probability one while announcing an incorrect answer.
\end{corollary}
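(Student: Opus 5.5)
The plan is to reduce to Theorem~\ref{thm:floor} by fixing the coins. Since the maintainer learns $\rho$ before the session, a cheating maintainer may condition its whole strategy on $\rho$. I will show that for every fixed $\rho$ there is a deterministic cheating strategy $\Maint_\rho$ that is accepted with certainty while announcing an incorrect answer. The combined strategy ``read $\rho$, then play $\Maint_\rho$'' is then accepted on an incorrect transcript with probability one over $\rho$, which is soundness error $1$.

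\textbf{Fixing the coins.} Fix $\rho$. Once $\rho$ is fixed, the auditor is a deterministic automaton with $s$ bits of state beyond $\rho$. Its transition and output functions may depend on $\rho$, but $\rho$ is a constant of the run and need not be counted among the state bits. Perfect completeness for this $\rho$ supplies an honest strategy $\Maint_0^\rho$: it may depend on $\rho$, always answers correctly, and is always accepted. This is exactly the hypothesis of Definition~\ref{def:detauditor}, restricted to completeness.

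\textbf{Rerunning the splice.} I then repeat the argument of Theorem~\ref{thm:floor} with $U=2n$, but do not invoke soundness. For each $n$-subset $X$, let $\sigma_\rho(X)$ be the state after the honest insertion stream $\pi_X$. Since $2^s<\binom{2n}{n}$, pigeonhole gives $X\neq Y$ with $\sigma_\rho(X)=\sigma_\rho(Y)$, and we pick $x^\ast\in X\setminus Y$. Let $\rho'$ denote the honest world-$Y$ suffix: $\op{mem}(x^\ast)$ answered \ans{absent}, its notch, and the disclosure. The auditor accepts $\pi_Y\rho'$ by completeness. Define $\Maint_\rho$ to play $\pi_X$ honestly and then replay $\rho'$ byte for byte. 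By determinism from the common state, the spliced run is accepted. Its transcript is incorrect, because $x^\ast\in X$ while the announced answer is \ans{absent}. The maintainer can actually compute $X$, $Y$ and $\rho'$: it knows $\rho$ and is computationally unbounded, so it can simulate the auditor on every $\pi_X$ and every honest continuation. The argument also needs the auditor not to reject during $\pi_X$, and this follows as before, since $\pi_X$ extends to an accepted honest session.

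\textbf{Main obstacle.} The delicate point is the quantifier order and what ``visible'' buys. The pair $X,Y$ and the suffix $\rho'$ depend on $\rho$, so a maintainer that does not know $\rho$ could not choose them. Visibility is exactly what lets the strategy be defined pointwise in $\rho$. I would also note the ``equivalently infer'' clause: the maintainer needs $\rho$ only to the extent that it determines the auditor's transitions. Finally, the session used has length $n+1$, which matches the length requirement inherited from Theorem~\ref{thm:floor}.
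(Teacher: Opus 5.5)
Your proposal is correct and follows the paper's proof. You hard-wire $\rho$ to get a perfectly complete deterministic auditor with $s$ bits of state, rerun the pigeonhole-and-splice construction from the proof of Theorem~\ref{thm:floor} using completeness alone, and let the maintainer, who sees $\rho$, play the resulting accepted incorrect transcript for every $\rho$. Your added remarks — that the honest strategy may depend on $\rho$, that an unbounded maintainer can locate $X_\rho, Y_\rho$, and that the session has length $n+1$ — are accurate refinements of that same argument.
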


\begin{proof}
Fix any value of $\rho$.  The auditor with $\rho$ hard-wired is a
deterministic auditor with $s$ bits of state that is perfectly complete, so
by the proof of Theorem~\ref{thm:floor} there exist, \emph{for this}
$\rho$, sets $X_\rho\neq Y_\rho$ and a spliced transcript
$\pi_{X_\rho}\rho'$ that it accepts despite an incorrect answer.  The
maintainer, who sees $\rho$, selects and plays that very transcript.
Acceptance holds for every $\rho$, hence with probability one.
\end{proof}

Corollary~\ref{cor:visible} delimits the design space exactly: below the
$\binom{2n}{n}$ floor, an auditor must randomize, must hide the randomness,
and---if it is to stay perfectly complete, as ours is---must hide it
unconditionally.  The five-word stock of Theorem~\ref{thm:audit} sits as low
in that space as the floor permits, with one secret word doing all the
hiding.

\subsection{The temporal rule cannot be dropped}

The remaining pillar is rule \rl{0}.  Its necessity is sharper than a
counting bound: removing it admits a forgery that is accepted not with
noticeable probability but with certainty, because the forged transcript is
\emph{exactly} balanced.  The public clock, which is what makes the honest
maintainer's bookkeeping free (Lemma~\ref{lem:clock}), is also what a forger
exploits: future codes are predictable.

\begin{proposition}[Temporal replay]\label{prop:replay}
Let the \emph{weakened auditor} be the auditor of Figure~\ref{fig:auditor}
with step 4 deleted and all other checks intact, including meshing and the
disclosure code bound.  There is a maintainer strategy over $U=8$ issuing
three operations that announces an incorrect answer and is accepted by the
weakened auditor with probability one, for every $z$.  The full auditor
rejects the same transcript deterministically at its second event.
\end{proposition}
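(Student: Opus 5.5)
The plan is to exhibit the forgery explicitly and then check it against each step of Figure~\ref{fig:auditor} and Definition~\ref{def:disclosure}. The idea is to exploit Lemma~\ref{lem:clock}: the code the counter is about to issue is public. An event that cites the code it is itself about to mint can therefore consume a gap that does not yet exist. In the same event it re-births that very triple, so births and consumptions cancel exactly. Rule \rl{5} is the natural vehicle, because it consumes and re-births the same gap $\gap{x}{b}$, and its forced answer \ans{present} is the false claim we want.

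\textbf{Concrete session.} Take $U=8$, so the sentinels are $0$ and $9$. Initially $\igap{0}{9}{0}$ is born and $\kap=1$.
\begin{enumerate}[label=(\arabic*),leftmargin=2.4em,itemsep=0.1em]
\item $\op{ins}(3)$ is played honestly by \rl{1} with notch $\langle 0,9,0\rangle$. It consumes $\igap{0}{9}{0}$ and births $\igap{0}{3}{1}$ and $\igap{3}{9}{2}$, leaving $\kap=3$.
\item $\op{mem}(5)$ is answered \ans{present} under \rl{5} with notch $\langle 9,3\rangle$. This cites $\igap{5}{9}{3}$, and the predicate $5<9$ holds. The birth step then issues $\igap{5}{9}{3}$, since $\kap_{\mathrm{start}}=3$, and sets $\kap=4$.
\item $\op{mem}(3)$ is played honestly by \rl{5}. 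It cites $\igap{3}{9}{2}$ and births $\igap{3}{9}{4}$, leaving $\kap=5$.
\end{enumerate}
The disclosure is $\igap{0}{3}{1},\igap{3}{9}{4}$.

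\textbf{Verification.} The checks are routine bookkeeping, taken in order.
\begin{itemize}[leftmargin=2.4em,itemsep=0.1em]
\item Every notch matches its rule shape, every predicate and endpoint bound holds, and every announced answer equals the forced one. So with step 4 deleted the flag stays \textsf{ok}.
\item The disclosure meshes, since $0<3<9$. It has length $2\le\kap=5$, and its codes are below $\kap$.
\item The birth multiset is $\{\igap{0}{9}{0},\igap{0}{3}{1},\igap{3}{9}{2},\igap{5}{9}{3},\igap{3}{9}{4}\}$.
\item The consumption multiset (run plus disclosure) is the same five triples.
\end{itemize}
Hence the polynomial $P$ from the proof of Theorem~\ref{thm:audit} is identically zero. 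So $\acc=\rel$ for every $z$, and acceptance is certain.

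The transcript is incorrect because the true state after event~1 is $\{3\}$, so the correct answer to $\op{mem}(5)$ is \ans{absent}. The full auditor rejects at event~2, because the cited code is $3\not<\kap_{\mathrm{start}}=3$, violating \rl{0}.

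\textbf{Main obstacle.} The only delicate point is the ordering inside an event. The citation must pass every local check before the code exists, and the subsequent birth must mint exactly the cited triple. This relies on the rule that citations are processed (step 5) before births (step 6), with births coded by the current $\kap$. I will also point out why meshing does not catch the forgery. The phantom gap $\gap{5}{9}$ is born and consumed inside event~2, so it never appears in $\Live_T$. The disclosure is therefore the true chain $\Gaps(\{3\})$, and the check passes. This is the sense in which end-of-session meshing cannot replace the temporal rule.
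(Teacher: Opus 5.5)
Your forgery is correct, but it works by a different mechanism from the paper's. In your transcript the citation $\igap{5}{9}{3}$ at event~2 names exactly the code $\kap_{\mathrm{start}}=3$ that the same event's single birth receives. The phantom consumption therefore cancels inside the event, the disclosure is precisely the honest $\Gaps(\{3\})$, the two multisets coincide, and the full auditor rejects at event~2 because $3\not<3$.

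The paper instead plays $\op{ins}(5)$, $\op{mem}(7)$, $\op{ins}(7)$ and at event~2 cites $\igap{7}{9}{5}$. That code is issued only at event~3, when a genuine insertion of $7$ births $\gap{7}{9}$ with code $5$. The forged rebirth $\igap{7}{9}{3}$ then survives into the disclosure $\Gaps(\{5,7\})$.

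Your self-cancelling citation buys two things:
\begin{itemize}
\item It is more economical. Your third operation is padding, and the single operation $\op{mem}(5)$ on the empty set, citing $\igap{5}{9}{1}$ and disclosing $\igap{0}{9}{0}$, already yields an accepted forgery.
\item It shows that the \emph{strictness} of \rl{0} matters, since your transcript also passes the relaxed rule $c\le\kap_{\mathrm{start}}$.
\end{itemize}
The paper's version buys something else: its citation genuinely points into a later event. It would therefore survive a patch that only forbade citing codes minted within the citing event. The two examples are complementary.

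One correction to your closing paragraph: the order of steps~5 and~6 is immaterial, since $\acc$ and $\rel$ are commutative products. Your forgery needs only two facts: event~2's birth is coded $\kap_{\mathrm{start}}$, and, with step~4 deleted, nothing inspects the cited code.
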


\begin{proof}
The initial gap is $\igap{0}{9}{0}$ and $\kap=1$.  The strategy issues
$\op{ins}(5)$, $\op{mem}(7)$, $\op{ins}(7)$, then calls the audit.

\emph{Event 1}, $\op{ins}(5)$, rule \rl{1}, notch $\langle 0,9,0\rangle$:
consume $\igap{0}{9}{0}$; births $\igap{0}{5}{1}$, $\igap{5}{9}{2}$;
$\kap=3$.  Honest so far; the state is $\{5\}$.

\emph{Event 2}, $\op{mem}(7)$, announced \ans{present}---incorrect, since
$7\notin\{5\}$---rule \rl{5}, notch $\langle 9,5\rangle$: the citation is
$\igap{7}{9}{5}$.  Code $5$ has not been issued
($\kap_{\mathrm{start}}=3$); the weakened auditor does not notice, finds the
predicate $7<9$ satisfied, consumes $\igap{7}{9}{5}$, and rebirths
$\gap{7}{9}$ as $\igap{7}{9}{3}$; $\kap=4$.  The forger could write code $5$
because, by Lemma~\ref{lem:clock}, it can compute today what the counter
will issue tomorrow.

\emph{Event 3}, $\op{ins}(7)$, rule \rl{1}, notch $\langle 5,9,2\rangle$:
consume $\igap{5}{9}{2}$---genuinely live since event 1---and birth
$\igap{5}{7}{4}$ and $\igap{7}{9}{5}$; $\kap=6$.  The second birth receives
precisely the code cited one event earlier.

\emph{Disclosure}: $\igap{0}{5}{1}$, $\igap{5}{7}{4}$, $\igap{7}{9}{3}$.
The pairs mesh ($0\to5\to7\to9$), the length $3\le\kap=6$, and all codes are
below $6$, so every disclosure check passes.  Now compare the ledgers as
multisets:
\[
\Births=\bigl\{\igap{0}{9}{0},\igap{0}{5}{1},\igap{5}{9}{2},
\igap{7}{9}{3},\igap{5}{7}{4},\igap{7}{9}{5}\bigr\},
\]
\[
\Consumed=\bigl\{\igap{0}{9}{0},\igap{7}{9}{5},\igap{5}{9}{2}\bigr\}
\cup\bigl\{\igap{0}{5}{1},\igap{5}{7}{4},\igap{7}{9}{3}\bigr\}
=\Births .
\]
Balance is exact, so $\acc=\rel$ as polynomials in $z$ and the weakened
auditor accepts for every $z$, despite the incorrect answer at event 2.
The full auditor, by contrast, rejects at event 2 in step 4: the cited code
$5$ is not below $\kap_{\mathrm{start}}=3$, a deterministic, $z$-free
rejection.
\end{proof}

\begin{remark}[Meshing is no substitute]\label{rem:nomesh}
The disclosure in the forgery is a perfect interlocking chain---indeed it
\emph{is} $\Gaps(\{5,7\})$ with plausible codes---so no end-of-session
consistency condition on the disclosed family can detect the attack.  What
rule \rl{0} polices is not the final shape of the ledger but the
\emph{direction of time} inside it: a citation must point backward.  The
proposition shows that with citations allowed to point forward, the two
halves of a forged tally can be carved to mesh exactly.
\end{remark}

\section{A maintainer, and its audited rebalancing}\label{sec:maintainer}

Nothing in Sections~\ref{sec:discipline}--\ref{sec:necessity} mentions an
implementation, and nothing needs to: the discipline is satisfied or it is
not, whatever produces the notches.  This section shows the discipline costs
an honest implementation essentially nothing, and then turns the
implementation's own running time into one more auditable claim.

\subsection{A leaf-oriented \texorpdfstring{$(2,4)$}{(2,4)}-tree}

\begin{theorem}[Maintainer]\label{thm:maintainer}
There is a maintainer that, for every operation sequence, announces correct
answers, emits notches satisfying the indenture discipline, and produces the
honest disclosure, with the following costs: $O(\log n)$ worst-case time per
operation, $O(1)$ words of notch per operation, $n+O(1)$ words of storage
beyond the keys, and $O(n)$ time for the disclosure.  In particular the
scheme of Theorem~\ref{thm:audit} is perfectly complete at no asymptotic
overhead to the structure.
\end{theorem}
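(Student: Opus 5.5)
The construction is a leaf-oriented $(2,4)$-tree whose leaves hold the keys of $\bar S$, including the two sentinels $0$ and $U+1$, in sorted order. The leaves are doubly linked. Each leaf $s$ other than $U+1$ carries one extra word $\mathrm{code}(s)$: the code of the live gap $\gap{s}{s'}$, where $s'$ is the next leaf. Internal nodes carry the usual routing keys, namely the maximum key of each subtree. The codes live only at leaves and are never moved during rebalancing. This gives storage $n+O(1)$ words beyond the keys and the standard tree. The maintainer also keeps a private copy of the counter $\kap$, advanced exactly as Lemma~\ref{lem:clock} prescribes. The representation invariant is
\[
\{\igap{s}{s'}{\mathrm{code}(s)} : s \text{ a leaf}, s' \text{ its successor leaf}\} \;=\; \Live_t ,
\]
whose gap family is $\Gaps(S_t)$. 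At $t=0$ the tree has the two leaves $0$ and $U+1$, with $\mathrm{code}(0)=0$.

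For each operation we do one root-to-leaf search, in $O(\log n)$ time, to find the leaf $s$ with $s\le x<s'$; note that $0\le x$ always holds. The branch follows from whether $s=x$. Each rule of Table~\ref{tab:grammar} then reads its notch off $O(1)$ neighbouring leaves.
\begin{itemize}
\item For \rl{2}, \rl{5}, \rl{6}, \rl{4}, \rl{7$'$} the notch is $\gap{s}{s'}$ with $\mathrm{code}(s)$, or just $\langle s',\mathrm{code}(s)\rangle$ when $s=x$.
\item For \rl{7} we use the leaf $s$ with $s<x\le s'$, which is the same leaf or its predecessor.
\item For \rl{8} and \rl{8$'$} we use the first leaf and the predecessor of $U+1$; these can be kept as pointers, so they cost $O(1)$.
\item For \rl{3} we take the predecessor leaf $a$ of $x$ and cite $\mathrm{code}(a)$ and $\mathrm{code}(x)$.
\end{itemize}
After emitting the notch, the maintainer writes the birth codes into the appropriate leaves, in the rule's birth order:
\begin{itemize}
\item For \rl{1}, $\mathrm{code}(s)\leftarrow\kap$ and then, on the new leaf, $\mathrm{code}(x)\leftarrow\kap+1$.
\item For \rl{3}, $\mathrm{code}(a)\leftarrow\kap$ once leaf $x$ is removed.
\item For every rebirth rule, the single cited leaf gets $\mathrm{code}\leftarrow\kap$.
\end{itemize}
In each case $\kap$ then advances by the number of births. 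This keeps the invariant, by Lemma~\ref{lem:rewrites} for updates and trivially for queries. It also gives correct announced answers, by Lemma~\ref{lem:queries} and the forcing table.

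Insertion of a new leaf and deletion of a leaf use standard $(2,4)$-tree splits, fusions and shares, in $O(\log n)$ worst-case time. The step I expect to need care is verifying that rebalancing does not disturb the invariant. It does not: rebalancing only reassigns leaves to parents and updates routing keys, while the leaf order, the linked list, and the per-leaf code words are untouched. So the ledger is determined by the leaf list alone, and rebalancing is invisible to it.

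It remains to check the discipline and the disclosure.
\begin{itemize}
\item \emph{Predicates and endpoint bounds.} These hold because cited pairs are consecutive leaves.
\item \emph{Temporal rule \rl{0}.} It holds because every code stored at a leaf was assigned at an earlier event, so it is below $\kap_{\mathrm{start}}$.
\item \emph{Disclosure.} A single walk along the leaf list emits $\igap{s}{s'}{\mathrm{code}(s)}$, in $O(n)$ time. These pairs mesh by Lemma~\ref{lem:tiling}(ii), and there are $n_T+1\le\kap$ of them.
\item \emph{Balance and completeness.} Every birth is either consumed in the run or still live and hence disclosed, so the transcript is balanced. Completeness then follows exactly as in the completeness part of Theorem~\ref{thm:audit}.
\item \emph{Notch size.} Each notch is at most four words, so $O(1)$.
\end{itemize}
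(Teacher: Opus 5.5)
Your proposal is correct and follows essentially the same route as the paper. Both use a leaf-threaded $(2,4)$-tree whose cells carry the code of the gap to their right, rewrite those codes at each event to the publicly predictable values $\kap$ and $\kap+1$, leave the codes untouched during rebalancing, and produce the disclosure with a single walk of the leaf list. The differences are cosmetic: you store both sentinels as ordinary leaves where the paper uses a permanent head cell outside the tree plus a tail pointer, and you state the leaf-list/$\Live_t$ invariant explicitly where the paper leaves it implicit.
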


\begin{proof}
Keep the elements of $S$ in the leaves of a $(2,4)$-tree: every internal
node has between $2$ and $4$ children, all leaves are at the same depth,
and the keys appear in the leaves in increasing
order~\cite{bayermccreight72,ahu74,huddlestonmehlhorn82}; internal nodes
store routing guides (the maximum key of each subtree).  Thread the leaves
into a doubly linked list and prepend a permanent head cell representing the
sentinel $0$; keep pointers to the head and the last leaf.  The \emph{foil}
is one word per cell: the head and each leaf $\ell$ store the code of the gap
whose left endpoint is the key of that cell---that is, the gap from this key
to its successor (to $U+1$ at the last leaf, and $\gap{0}{U+1}$ at the head
when $S=\emptyset$).  By Lemma~\ref{lem:tiling} this stores every live code
exactly once.

Each operation first locates $x$ by a root-to-leaf search in $O(\log n)$
time; the predecessor and successor cells are then available in $O(1)$
through the leaf links (for $\op{min}$, $\op{max}$, through the head and
tail pointers, in $O(1)$ total).  The notch of Table~\ref{tab:grammar} is
assembled by reading at most two stored codes; the foil is then updated to
the codes the public clock will assign, which the maintainer computes by
Lemma~\ref{lem:clock} from its own mirror of $\kap$: a fresh insertion
writes $\kap$ into the predecessor's cell and $\kap+1$ into the new leaf; a
deletion writes $\kap$ into the predecessor's cell and discards the removed
leaf's; every query rewrites the single cited cell to $\kap$.  Rebalancing
---splits, fusions, shares of internal nodes---moves leaves but never
inspects or alters a stored code: codes live in the cells and travel with
them.  Hence each operation costs $O(\log n)$ worst-case time (search plus a
root-to-leaf rebalancing walk) and $O(1)$ work for the discipline.  The
disclosure walks the leaf list once, emitting
$\igap{\text{key}}{\text{next}}{\text{code}}$ from the head onward: $n+1$
triples in $O(n)$ time, meshing by construction.  Correct answers,
satisfied predicates, the temporal rule, and balance are exactly the honest
behavior verified in the completeness half of Theorem~\ref{thm:audit}.
Storage: one code word per element, plus the head, plus the mirror of
$\kap$.
\end{proof}

\subsection{The attestation envelope}

The maintainer of Theorem~\ref{thm:maintainer} does a variable amount of
internal work per operation: a single insertion can cascade splits to the
root.  Classically one proves such structures do $O(1)$ rebalancing per
update \emph{on average}~\cite{huddlestonmehlhorn82,tarjan85}, and the claim
lives in the analysis, invisible at run time.  Here the claim can be moved
into the tally.  Call a split, a fusion, a share, a root creation, or a root
collapse a \emph{structural event}.

\begin{theorem}[Attestation envelope]\label{thm:envelope}
Over any sequence of $m$ update operations ($\op{ins}$ and $\op{del}$,
effective or not) applied to the initially empty maintainer of
Theorem~\ref{thm:maintainer}, the total number of structural events is at
most $2m$; moreover the bound holds at every prefix of the sequence.
\end{theorem}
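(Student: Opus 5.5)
We prove Theorem~\ref{thm:envelope} with the potential function $\varphi(1{:}5)=(3,1,0,2,4)$ announced in the introduction. Degrees $1$ and $5$ occur only transiently during rebalancing. Set
\[
\Phi\;=\;\sum_{v\ \text{internal}}\varphi(\deg v).
\]
Since $\varphi\ge 0$, we have $\Phi\ge 0$ always, and $\Phi=0$ for the empty structure. It suffices to show that every update, effective or not, has amortized cost (structural events $+\,\Delta\Phi$) at most $2$. Telescoping then gives, at every prefix of length $m'$, that the number of events is at most $2m'-\Phi_{\mathrm{final}}\le 2m'$. Ineffective updates (duplicate insertions, absent deletions) touch no node, so their cost is $0$. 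The argument is a step-by-step accounting along the rebalancing walk of Theorem~\ref{thm:maintainer}, and I would organize it as follows.

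\emph{Insertion.} Attaching a new leaf raises the degree of its parent by one. The one-step changes $\varphi(d+1)-\varphi(d)$ are $-2,-1,+2,+2$ for $d=1,2,3,4$, so this costs at most $+2$. If the parent reaches degree $5$, it splits into nodes of degrees $2$ and $3$: the potential drops from $4$ to $1+0=1$, a change of $-3$. The parent of the split node gains a child, costing at most $+2$. So one split has amortized cost at most $1-3+2=0$, and the cascade continues with the same invariant. If the root splits, a new root of degree $2$ is created. That step pays for two events (split and root creation) with change $(1+0-4)+\varphi(2)=-2$, again amortized cost $0$. Hence an insertion costs at most $2$ amortized.

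\emph{Deletion.} Removing a leaf lowers the parent's degree by one, costing at most $+2$ (the worst case is $2\to1$). A node of degree $1$ is repaired in one of two ways. A fusion with a sibling of degree $2$ produces a node of degree $3$: the potential goes from $3+1$ to $0$, and the parent's degree drops, costing at most $+2$, so the amortized cost is at most $1-4+2=-1$, and the cascade may continue. A share with a sibling of degree $3$ or $4$ yields degrees $(2,2)$ or $(2,3)$ with the parent unchanged, giving amortized cost $1+(2-3)=0$ or $1+(1-5)=-3$, after which the walk stops. A root that drops to degree $1$ collapses, with cost $1-\varphi(1)=-2$. So every rebalancing step has nonpositive amortized cost, and a deletion also costs at most $2$.

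The step I expect to be the main obstacle is the boundary bookkeeping rather than the cascade arithmetic. I need to fix conventions for trees with zero or one leaf: when the first insertion creates a root, and when the last deletion removes it. These must be set so that root creation and collapse at these sizes are still covered by the $+2$ allowance, and $\Phi$ never goes negative. I would treat a one-leaf structure as having no internal node, so that $\Phi=0$ and there is no root event. Then the second insertion creates a degree-$2$ root at amortized cost $1+\varphi(2)=2$, and the symmetric deletion collapses it at cost $1-\varphi(2)\le 2$. I would also check that the head cell of Theorem~\ref{thm:maintainer} sits outside the tree, so that it contributes no degree. The remaining care is to verify that each of the five event types listed before Theorem~\ref{thm:envelope} is charged exactly once in the cases above.
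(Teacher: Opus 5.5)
Your proposal is correct and follows the paper's proof. It uses the same potential $\varphi(1{:}5)=(3,1,0,2,4)$ and the same step-by-step amortized accounting: one arrival or departure step costing at most $+2$, nonpositive amortized cost for splits, fusions, shares and root events, and the second insertion creating a degree-$2$ root at amortized cost $2$. It then closes with the same telescoping argument using $\Phi\ge 0$ and $\Phi=0$ initially, which gives the bound at every prefix.
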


\begin{proof}
Assign to each internal node of degree $d$ the potential
\[
\varphi(1)=3,\qquad \varphi(2)=1,\qquad \varphi(3)=0,\qquad
\varphi(4)=2,\qquad \varphi(5)=4,
\]
degrees $1$ and $5$ occurring only transiently, and let $\Phi$ be the sum
over all internal nodes; $\Phi=0$ for the empty structure and $\Phi\ge 0$
always.  Decompose each update into elementary steps and write the
\emph{amortized cost} of a step as its number of structural events plus the
change in $\Phi$.  We verify every step:

\smallskip
\noindent\emph{Insertions.}
\begin{itemize}[leftmargin=2em,itemsep=0.1em]
\item Creating the first leaf: no internal node, no event, $\Delta\Phi=0$;
amortized $0$.
\item Creating the first internal root (second insertion), degree $2$: one
root-creation event, $\Delta\Phi=\varphi(2)=+1$; amortized $2$.
\item Arrival of a leaf (or, in a cascade, of a split-off node) at a node of
degree $d\in\{2,3,4\}$: no event,
$\Delta\Phi\in\{\varphi(3)-\varphi(2),\varphi(4)-\varphi(3),
\varphi(5)-\varphi(4)\}=\{-1,+2,+2\}\le+2$; amortized $\le 2$.
\item Split of a non-root node of degree $5$ into degrees $2$ and $3$, its
parent gaining a child: one event,
$\Delta\Phi=\bigl(\varphi(2)+\varphi(3)-\varphi(5)\bigr)
+\Delta\varphi_{\text{parent}}\le-3+2=-1$; amortized $\le 0$.
\item Split of a degree-$5$ root: one split and one root creation (the new
root has degree $2$), $\Delta\Phi=-3+\varphi(2)=-2$; amortized $0$.
\end{itemize}

\noindent\emph{Deletions.}
\begin{itemize}[leftmargin=2em,itemsep=0.1em]
\item Removing the only leaf: no internal node, amortized $0$.
\item Departure of a leaf (or of a fused-away node) from a node of degree
$d\in\{2,3,4\}$: no event,
$\Delta\Phi\in\{+2,+1,-2\}\le+2$; amortized $\le 2$.
\item Share: a node of degree $1$ borrows a child from an adjacent sibling
of degree $e\in\{3,4\}$: one event,
$\Delta\Phi=\bigl(\varphi(2)-\varphi(1)\bigr)
+\bigl(\varphi(e-1)-\varphi(e)\bigr)\le-2+1=-1$; amortized $\le 0$, and the
cascade ends, since the parent's degree is unchanged.
\item Fusion: a node of degree $1$ merges into an adjacent sibling of degree
$2$, leaving one node of degree $3$; the parent loses a child: one event,
$\Delta\Phi=\bigl(\varphi(3)-\varphi(1)-\varphi(2)\bigr)
+\Delta\varphi_{\text{parent}}\le-4+2=-2$; amortized $\le-1$, and the
cascade, if it continues at the parent, is already paid for.
\item Root collapse: the root reaches degree $1$ (through one of the steps
above, which charged its $+2$) and is removed, its child promoted: one
event, $\Delta\Phi=-\varphi(1)=-3$; amortized $-2$.
\end{itemize}

Each update performs exactly one arrival or departure step (the only steps
of positive amortized cost, each $\le 2$), or is a no-op; all other steps
have nonpositive amortized cost.  Summing over any prefix of $m$ updates,
\[
\#\{\text{structural events}\}\;=\;\sum(\text{amortized costs})
-\bigl(\Phi_{\text{final}}-\Phi_0\bigr)\;\le\;2m-\Phi_{\text{final}}
\;\le\;2m. \qedhere
\]
\end{proof}

The constants are not an artifact of generosity: a workload that inserts
$m/2$ consecutive keys and deletes them again drives the event count to
$m-O(\log m)$, so the envelope is tight within a factor of two, and no
potential of this form can beat the inherent one-event-per-update floor of
alternating split--fuse churn.

\begin{definition}[Attested mode]\label{def:modeii}
In \emph{attested mode} the maintainer appends one extra word to the tally
per structural event---an \emph{attestation} naming the event's kind---and
the auditor keeps two additional counters, of updates seen and attestations
seen, rejecting the moment attestations exceed twice updates.
\end{definition}

\begin{corollary}[Audited amortization]\label{cor:modeii}
In attested mode the honest maintainer is never rejected
(Theorem~\ref{thm:envelope} holds at every prefix), and any accepted
transcript carries at most $2m$ attestations.  The amortized-cost claim of
Theorem~\ref{thm:envelope} is thereby checked by the auditor itself, by
counting against a public envelope, rather than asserted by analysis alone.
The certificate's scope is exactly the attested schedule: a maintainer that
performs structural work without attesting it gains nothing except a slower
structure, since attestations carry no semantic weight in
Theorem~\ref{thm:audit} and omitting one cannot create or hide a gap.
\end{corollary}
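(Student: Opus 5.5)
The statement to prove is Corollary~\ref{cor:modeii}. It has three parts: honest maintainers are never rejected, accepted transcripts carry at most $2m$ attestations, and omitted attestations do not affect soundness. I would prove them in that order, each as a short deduction from Theorem~\ref{thm:envelope}, Definition~\ref{def:modeii}, and Theorem~\ref{thm:audit}.

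\emph{Completeness.} The honest maintainer attests exactly the structural events it performs. After any prefix containing $m'$ updates, the attestations seen equal the structural events performed so far. Theorem~\ref{thm:envelope} bounds this by $2m'$ at every prefix. The comparison of Definition~\ref{def:modeii} is made after each record, so I would check the timing carefully. Attestations for an update's cascade arrive together with, or after, that update's record, so the counter of updates already includes the current update when its attestations are counted. The prefix form of Theorem~\ref{thm:envelope} therefore applies at every comparison point, and the check never fires. The only subtlety is the interleaving convention: attestations must not be emitted before the update that causes them is counted. I would fix the convention that an update's attestations are appended in its own event, after the update counter increments. The honest maintainer of Theorem~\ref{thm:maintainer} is otherwise unchanged, so its acceptance under Theorem~\ref{thm:audit} is unaffected.

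\emph{The $2m$ cap on accepted transcripts.} This is immediate from the rejection rule. If attestations ever exceeded twice the updates at some prefix, the flag would be lowered permanently and the audit would reject. So any accepted transcript satisfies the inequality at its final prefix, giving at most $2m$ attestations. This check uses only public counters, so it never reads $z$. It therefore preserves the independence of the transcript from $z$ used in the soundness proof of Theorem~\ref{thm:audit}.

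\emph{Scope and semantic inertness.} I would note that attestation words enter neither $\acc$, nor $\rel$, nor $\kap$, nor the forced-answer checks. Deleting or adding them leaves the birth and consumption multisets, and hence the polynomial $P$ in the proof of Theorem~\ref{thm:audit}, unchanged. So the soundness bound $(4T+1)/p$ holds verbatim in attested mode. The claim that under-attesting ``gains nothing'' then reduces to two observations. It cannot cause rejection under the envelope check, since that check only caps attestations from above. And it cannot alter gap correctness, since attestations play no role in the gap ledger. The certificate therefore speaks only about the attested schedule.
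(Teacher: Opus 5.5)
Your proposal is correct and follows the argument the paper compresses into the statement itself; the paper gives no separate proof. As there, completeness comes from the every-prefix form of Theorem~\ref{thm:envelope}, the $2m$ cap comes directly from the rejection rule of Definition~\ref{def:modeii}, and the scope claim comes from the fact that attestations enter neither $\acc$, $\rel$, nor $\kap$ and are counted without reading $z$. Your explicit convention that an update is counted before its own attestations is a detail the paper leaves implicit, and it is what makes the prefix bound apply at every comparison point.
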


\begin{corollary}[Session tally]\label{cor:tally}
By Table~\ref{tab:grammar}, an update notches at most $4$ words and a query
at most $3$; the disclosure is $3(n_T+1)$ words; attested mode adds at most
$2m$ words.  A session of $m$ updates and $q$ queries ending at size $n_T$
therefore writes at most
\[
4m+3q+2m+3(n_T+1)
\]
words of tally.  For instance, $m=600$ updates and $q=400$ queries ending at
$n_T=100$ write at most
$4\cdot600+3\cdot400+2\cdot600+3\cdot101=5103$ words---about forty kilobytes
on a $64$-bit machine---against which the auditor holds five words and a
flag, six in attested mode.
\end{corollary}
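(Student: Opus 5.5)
The statement to establish is Corollary~\ref{cor:tally}. It is pure bookkeeping, so I would bound each of the four contributions to the tally separately and add them.

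\emph{Operation notches.} I would read the notch widths off Table~\ref{tab:grammar}, noting that the key $x$ belongs to the operation and is not part of the notch. For an update ($\op{ins}$ or $\op{del}$, effective or not) the admissible rules are \rl{1}--\rl{4}. Their notches are $\langle a,b,c\rangle$, $\langle b,c\rangle$, $\langle a,c_1,b,c_2\rangle$ and $\langle a,b,c\rangle$, so the maximum is $4$, attained by \rl{3}. For a query the rules are \rl{5}--\rl{8$'$}, with notches of $2$ or $3$ words, so the maximum is $3$. Each endpoint and code fits in one word by Remark~\ref{rem:prime}, so every field counts as one word, and the $m$ updates and $q$ queries contribute at most $4m+3q$ words.

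\emph{Disclosure and attestations.} For the disclosure, the honest maintainer emits $\Gaps(S_T)$ with codes. By Lemma~\ref{lem:tiling}(i) this is $n_T+1$ triples of $3$ words each, totalling $3(n_T+1)$. I would note explicitly that this is the honest count. A dishonest disclosure is limited only by the length cap $k\le\kap$, but the corollary is about the session's tally as produced by the maintainer of Theorem~\ref{thm:maintainer}, so I would say so.

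For attestations, each structural event costs one word by Definition~\ref{def:modeii}. Theorem~\ref{thm:envelope} bounds the number of events by $2m$, so they contribute at most $2m$ words.

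Summing gives $4m+3q+2m+3(n_T+1)$. The numerical instance is then direct substitution: $2400+1200+1200+303=5103$ words, which is $40824$ bytes, or about forty kilobytes.

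The auditor's side follows from Definition~\ref{def:stock}: five words and a flag. Attested mode adds two counters, of updates and of attestations. I would note that the update count can be taken as already available (for example, derived from $t$ minus the number of queries), or else the paper's "six" means one extra word. This accounting of the counters is the only point needing care, and I would match the paper's convention there.
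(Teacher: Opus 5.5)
Your accounting is correct and is the same bookkeeping the paper gives inline in the statement. You read the notch widths off Table~\ref{tab:grammar}: the maximum is $4$ at \rl{3} for updates and $3$ for queries. You take $n_T+1$ honest triples by Lemma~\ref{lem:tiling}(i), bound the one-word attestations by $2m$ via Theorem~\ref{thm:envelope}, and substitute to get $5103$ words. Your caveat that $3(n_T+1)$ is the honest disclosure length, not the cap $k\le\kap$, is appropriate.

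The one soft spot is the "six in attested mode" remark. Your first reconciliation, recovering the update count as $t$ minus the number of queries, saves nothing: the query count is not in the stock and would itself cost a word. It also cannot be derived from what is stored, since $\kap$ and $t$ together give only the number of fresh insertions. The clean justification replaces the two counters of Definition~\ref{def:modeii} by a single word $D=2\cdot(\text{updates seen})-(\text{attestations seen})$. Add $2$ to $D$ per update, subtract $1$ per attestation, and reject as soon as $D<0$. This is exactly the stated rejection rule, and it gives six words and a flag.
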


\section{Checkpoints and epochs}\label{sec:epochs}

The audit of Theorem~\ref{thm:audit} need not wait for the end of the
maintainer's life.  An audit may be called at any instant $t$, and an
accepted audit may open the next \emph{epoch} in place, reusing the same
secret.

\begin{definition}[Epoch protocol]\label{def:epochs}
Audits are called at instants $t_1<t_2<\cdots<t_k$, splitting the session
into epochs $i=1,\dots,k$ of lengths $T_i=t_i-t_{i-1}$ (with $t_0=0$).  At
$t_i$ the maintainer discloses as in Definition~\ref{def:disclosure}, except
that the length cap for epoch $i$ is $k_{i-1}+2T_i$, where $k_{i-1}$ is the
length of the previous accepted disclosure ($k_0=1$, the initial gap).  The
auditor verifies the epoch as before and announces the verdict;
\emph{on reject it halts permanently}.  On accept it opens epoch $i+1$:
it resets $\acc\leftarrow1$, $\rel\leftarrow1$, and re-births each disclosed
triple's gap with a fresh code from the continuing counter $\kap$, the
maintainer mirroring the assignment by Lemma~\ref{lem:clock}.
\end{definition}

\begin{theorem}[Checkpoints and epochs]\label{thm:epochs}
Under the epoch protocol with a prime $p$ exceeding
$(2\kap_{\mathrm{max}})(U+2)^2$ for the session's final counter value
$\kap_{\mathrm{max}}$:
\begin{enumerate}[label=(\alph*),leftmargin=2.4em,itemsep=0.1em]
\item each audit costs one pass over its disclosure, $O(k_{i-1}+T_i)$ field
operations in $O(1)$ words of working space, and the honest maintainer is
accepted at every audit with probability one;
\item the probability that some accepted epoch contains an incorrect answer
is at most
\[
\sum_{i=1}^{k}\frac{4T_i+n_{t_{i-1}}+1}{p},
\]
where $n_{t_{i-1}}$ is the true size at the start of epoch $i$.
\end{enumerate}
\end{theorem}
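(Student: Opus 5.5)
The plan is to treat each epoch as a fresh instance of the single-audit argument of Theorem~\ref{thm:audit}. The only change is that epoch $i$ starts from the re-born disclosure of epoch $i-1$ rather than from the lone gap $\igap{0}{U+1}{0}$. The per-epoch bounds are then combined by a union bound.

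For part (a), the cost claim is read off Definition~\ref{def:epochs} and Figure~\ref{fig:auditor}. The auditor does one pass over at most $k_{i-1}+2T_i$ disclosed triples, with $O(1)$ field operations each: a meshing check that only remembers the previous right endpoint, a code-bound check, and one multiplication into $\rel$. It also re-births the accepted disclosure into $\acc$, again in one pass with $O(1)$ words.

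Honest acceptance is proved by induction over epochs. Suppose epoch $i-1$ was accepted honestly. Then the re-birthed family is $\Gaps(S_{t_{i-1}})$ with fresh codes, and the honest maintainer mirrors these codes by Lemma~\ref{lem:clock}. Its foil therefore again stores exactly the live codes. The honest epoch-$i$ disclosure has length $n_{t_i}+1\le n_{t_{i-1}}+1+T_i\le k_{i-1}+2T_i$, so it passes the length cap. Each birth in epoch $i$ is consumed inside the epoch, so $\acc=\rel$ identically in $z$.

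For part (b), I first reprove Lemmas~\ref{lem:liveness} and~\ref{lem:forcing} relative to an epoch whose initial ledger is the re-born family $L$. The epoch's birth multiset is $\Births_i$, consisting of $L$ with fresh codes followed by the in-epoch births. Its consumption multiset $\Consumed_i$ consists of the in-epoch citations plus the disclosure. Liveness goes through verbatim, since the temporal rule and the code bound still point backward. Forcing needs the base case that the gap family of $L$ equals $\Gaps(S_{t_{i-1}})$. This is where the meshing check at the previous audit matters (Corollary~\ref{cor:meshing}). If epoch $i-1$ was accepted and balanced, Lemma~\ref{lem:forcing} identifies its disclosure with $\Gaps(S_{t_{i-1}})$. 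I therefore define the bad event $E_i$ as: epoch $i$ is accepted while $\Births_i\ne\Consumed_i$. Outside $\bigcup_{j\le i}E_j$, every accepted epoch up to $i$ is balanced. By induction its initial ledger is then the true gap family, and forcing makes all its answers correct. So the probability that some accepted epoch contains an incorrect answer is at most $\sum_i\Pr[E_i]$.

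The degree count for epoch $i$ is as follows. Births number at most $k_{i-1}+2T_i$. Consumptions number at most $2T_i$ in the run plus $k_{i-1}+2T_i$ at disclosure. On the relevant event, $k_{i-1}=n_{t_{i-1}}+1$, so the degree is at most $4T_i+n_{t_{i-1}}+1$, matching the claimed term. Injectivity needs codes below $2\kap_{\max}$, which the hypothesis on $p$ provides.

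The main obstacle is independence. The auditor's verdicts at earlier audits depend on $z$ through $\acc=\rel$, so later transcripts are not independent of $z$ and Schwartz--Zippel cannot be applied naively. I would handle this with halt-on-reject. Conditioned on all previous epochs having been accepted, each earlier verdict is a fixed value (\textsf{accept}), so the maintainer's view is the same deterministic function of its coins regardless of $z$. Concretely, I fix the maintainer strategy and define the hypothetical transcript $\tau^\ast$ that results when every audit announces accept. This $\tau^\ast$ is independent of $z$. On the event that all audits up to $i$ accept, the real transcript coincides with $\tau^\ast$ through epoch $i$. Hence $E_i$ implies that the fixed, $z$-independent nonzero polynomial $P_i$ of $\tau^\ast$ vanishes at $z$, with no conditioning on $z$ at all. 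This gives $\Pr[E_i]\le(4T_i+n_{t_{i-1}}+1)/p$, and the sum over $i$ bounds the total.
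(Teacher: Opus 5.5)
Your proposal is correct and follows the paper's route. You rerun liveness and forcing per epoch, with the re-born disclosure as the base case. Halt-on-reject makes the all-accept epoch transcript a fixed, $z$-independent object; your $\tau^\ast$ is exactly the paper's ``all-accept history.'' The epoch cap then gives degree $4T_i+k_{i-1}$, and you finish with a union bound over per-epoch bad events. As in the paper, replacing $k_{i-1}$ by $n_{t_{i-1}}+1$ inside each term is only justified when the earlier epochs of $\tau^\ast$ are balanced. The union bound is therefore cleanest when taken over the first-failure events $E_i\setminus\bigcup_{j<i}E_j$, which your phrase ``on the relevant event'' implicitly does.
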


\begin{proof}
(a) Costs and completeness are as in Theorem~\ref{thm:audit}, applied per
epoch; the honest disclosure of epoch $i$ has length $n_{t_i}+1\le
k_{i-1}+T_i$ within the cap, and the re-birth seeds epoch $i+1$ with the
true gap family carrying fresh codes, so the honest invariants are restored
verbatim.

(b) The verdict of each epoch is announced, so unlike the single-audit
setting the maintainer's later behavior may depend on information correlated
with $z$.  Halt-on-reject collapses this dependence: epoch $i$ is reached
only along the all-accept history, and along that one history the
maintainer's strategy determines the epoch-$i$ transcript $\tau_i$ as a
fixed object, independent of $z$.  Let $B_i$ be the event that $z$ reaches
epoch $i$ and $\tau_i$ is well-notched with unequal birth and consumption
multisets yet $P_i(z)=0$, where $P_i$ is the difference polynomial of
$\tau_i$ as in Theorem~\ref{thm:audit}.  Then
$\Pr[B_i]\le\Pr_z[P_i(z)=0]\le\deg P_i/p$, and
$\deg P_i\le\max\bigl(|\Births_i|,|\Consumed_i|\bigr)$ with
$|\Births_i|\le k_{i-1}+2T_i$ (the re-births plus at most two per event) and
$|\Consumed_i|\le 2T_i+(k_{i-1}+2T_i)$ by the epoch cap, hence
$\deg P_i\le 4T_i+k_{i-1}$.

On the complement of $B_1\cup\dots\cup B_k$, induct on $i$: epoch $i$ starts
with the re-births of an accepted, balanced epoch $i-1$, whose disclosure
equals $\Gaps(S_{t_{i-1}})$ with multiplicity one by
Corollary~\ref{cor:meshing}; the liveness and forcing lemmas
(\ref{lem:liveness}, \ref{lem:forcing}) then apply within epoch $i$ with
this family as the base case---the meshing check guarantees the re-birthed
family is a chain in every case---so an accepted epoch $i$ is balanced and
all its answers are correct, and $k_i=n_{t_i}+1$.  Therefore every accepted
epoch on this event is correct, and
\[
\Pr\bigl[\text{some accepted epoch is incorrect}\bigr]
\le\sum_i\Pr[B_i]\le\sum_i\frac{4T_i+k_{i-1}}{p}
=\sum_i\frac{4T_i+n_{t_{i-1}}+1}{p}.
\]
The prime bound keeps Lemma~\ref{lem:injective} valid for every code the
session ever issues.
\end{proof}

Epochs make the exposure latency of Section~\ref{sec:model} a tunable: audit
every $T_i$ operations and a false answer survives at most $T_i$ further
operations, at $O(n+T_i)$ amortized audit cost and additive error.  Taken to
the extreme $T_i=1$ the protocol audits after every operation; the error sum
grows as $T(n+5)/p$ and the per-operation cost as $O(n)$---the right regime
for small structures under intense suspicion, and a poor one otherwise,
which is the trade the model prices explicitly.

\section{An audited session, end to end}\label{sec:exemplar}

Every quantity in this section is illustrative data, not a design choice:
the scheme's parameters remain $p$, $U$, $T$ throughout the paper, and the
numbers here merely instantiate them small enough to recompute by hand.
Take $U=6$ (sentinels $0$ and $7$), a session of $T=6$ operations, and the
prime $p=997$: Lemma~\ref{lem:injective} requires
$p>(2T+2)(U+2)^2=14\cdot64=896$, which $997$ clears and which, for the
record, $101$ would not.  Since $a(U+2)+b\le 8\cdot7+7=63<64$, the encoding
reads $\nuf(a,b,c)=1+64c+8a+b$, and with ten codes its largest value is
$1+64\cdot9+8\cdot6+7=632<997$, so no reduction ever occurs and every
residue below can be checked by hand.  The auditor's secret is $z=131$; in
deployment $z$ is uniform and hidden, and $131$ is shown only so the reader
can multiply along.

The session: $\op{ins}(4)$, $\op{ins}(2)$, $\op{mem}(3)$, $\op{ins}(3)$,
$\op{del}(2)$, $\op{max}$.  The true states run
$\emptyset$, $\{4\}$, $\{2,4\}$, $\{2,4\}$, $\{2,3,4\}$, $\{3,4\}$, $\{3,4\}$, so the correct
answers are \ans{inserted}, \ans{inserted}, \ans{absent}, \ans{inserted},
\ans{deleted}, $4$.  Table~\ref{tab:honest} runs the honest maintainer.
Event 4 displays the touch-and-rebirth mechanism earning its keep: the
membership test at event 3 consumed $\igap{2}{4}{4}$ and rebirthed the same
gap as $\igap{2}{4}{5}$, so the insertion of $3$ one event later must cite
the \emph{reborn} code $5$---citing the spent code $4$ would unbalance the
ledger.

\begin{table}[H]
\centering\footnotesize\setlength{\tabcolsep}{3.5pt}
\begin{tabular}{@{}clllrr@{}}
\toprule
$t$ & event, rule, notch & consumes ($\nuf$) & births ($\nuf$) & $\acc$ & $\rel$\\
\midrule
0 & initial birth & --- & $\igap{0}{7}{0}$ (8) & 123 & 1\\
1 & $\op{ins}(4)$ \ans{inserted}, \rl{1} $\langle0,7,0\rangle$ &
$\igap{0}{7}{0}$ (8) & $\igap{0}{4}{1}$ (69), $\igap{4}{7}{2}$ (168) & 986 & 123\\
2 & $\op{ins}(2)$ \ans{inserted}, \rl{1} $\langle0,4,1\rangle$ &
$\igap{0}{4}{1}$ (69) & $\igap{0}{2}{3}$ (195), $\igap{2}{4}{4}$ (277) & 904 & 647\\
3 & $\op{mem}(3)$ \ans{absent}, \rl{6} $\langle2,4,4\rangle$ &
$\igap{2}{4}{4}$ (277) & $\igap{2}{4}{5}$ (341) & 587 & 253\\
4 & $\op{ins}(3)$ \ans{inserted}, \rl{1} $\langle2,4,5\rangle$ &
$\igap{2}{4}{5}$ (341) & $\igap{2}{3}{6}$ (404), $\igap{3}{4}{7}$ (477) & 685 & 708\\
5 & $\op{del}(2)$ \ans{deleted}, \rl{3} $\langle0,3,3,6\rangle$ &
$\igap{0}{2}{3}$ (195), $\igap{2}{3}{6}$ (404) & $\igap{0}{3}{8}$ (516) & 480 & 397\\
6 & $\op{max}$ answer $4$, \rl{8$'$} $\langle4,2\rangle$ &
$\igap{4}{7}{2}$ (168) & $\igap{4}{7}{9}$ (616) & 498 & 266\\
\midrule
 & disclose $\igap{0}{3}{8}$ (516) & & & 498 & 281\\
 & disclose $\igap{3}{4}{7}$ (477) & & & 498 & 480\\
 & disclose $\igap{4}{7}{9}$ (616) & & & 498 & 498\\
\bottomrule
\end{tabular}
\caption{The honest session over $U=6$, $p=997$, $z=131$.  Parenthesized
values are $\nuf(a,b,c)=1+64c+8a+b$; the accumulator columns show the
running products modulo $997$.  The disclosure
$\gap{0}{3},\gap{3}{4},\gap{4}{7}$ is $\Gaps(\{3,4\})$ and meshes.  Final
comparison: $\acc=\rel=498$, \textsf{accept}---as Theorem~\ref{thm:audit}
guarantees for every $z$, not only this one.}
\label{tab:honest}
\end{table}

\begin{table}[H]
\centering\footnotesize\setlength{\tabcolsep}{3.5pt}
\begin{tabular}{@{}clllrr@{}}
\toprule
$t$ & event, rule, notch & consumes ($\nuf$) & births ($\nuf$) & $\acc$ & $\rel$\\
\midrule
0 & initial birth & --- & $\igap{0}{7}{0}$ (8) & 123 & 1\\
1 & $\op{ins}(4)$ \ans{inserted}, \rl{1} $\langle0,7,0\rangle$ &
$\igap{0}{7}{0}$ (8) & $\igap{0}{4}{1}$ (69), $\igap{4}{7}{2}$ (168) & 986 & 123\\
2 & $\op{ins}(2)$ \ans{inserted}, \rl{1} $\langle0,4,1\rangle$ &
$\igap{0}{4}{1}$ (69) & $\igap{0}{2}{3}$ (195), $\igap{2}{4}{4}$ (277) & 904 & 647\\
3 & $\op{mem}(3)$ \textbf{\ans{present}}, \rl{5} $\langle4,4\rangle$ &
$\igap{3}{4}{4}$ (285) & $\igap{3}{4}{5}$ (349) & 334 & 62\\
4 & $\op{ins}(3)$ \ans{inserted}, \rl{1} $\langle2,4,5\rangle$ &
$\igap{2}{4}{5}$ (341) & $\igap{2}{3}{6}$ (404), $\igap{3}{4}{7}$ (477) & 901 & 938\\
5 & $\op{del}(2)$ \ans{deleted}, \rl{3} $\langle0,3,3,6\rangle$ &
$\igap{0}{2}{3}$ (195), $\igap{2}{3}{6}$ (404) & $\igap{0}{3}{8}$ (516) & 71 & 50\\
6 & $\op{max}$ answer $4$, \rl{8$'$} $\langle4,2\rangle$ &
$\igap{4}{7}{2}$ (168) & $\igap{4}{7}{9}$ (616) & 460 & 144\\
\midrule
 & disclose $\igap{0}{3}{8}$ (516) & & & 460 & 392\\
 & disclose $\igap{3}{4}{7}$ (477) & & & 460 & 957\\
 & disclose $\igap{4}{7}{9}$ (616) & & & 460 & 457\\
\bottomrule
\end{tabular}
\caption{The forged twin: identical except at event 3, where the maintainer
announces \ans{present} for the absent key $3$, fabricating the citation
$\igap{3}{4}{4}$.  Code $4$ is below $\kap_{\mathrm{start}}=5$, the
predicate $3<4$ holds, and the rebirth is well-formed, so every local check
of the \emph{full} auditor passes; the lie is invisible until the meshing of
the halves.  Final comparison: $\acc=460\neq 457=\rel$,
\textsf{reject}.}
\label{tab:forged}
\end{table}

Table~\ref{tab:forged} runs the forged twin.  The cheat is as gentle as
possible: one answer is flipped, the fabricated citation is shape-perfect
and temporally legal (code $4$ was genuinely issued before event 3---to a
\emph{different} gap), and every subsequent event copies the honest run.
The two phantom factors it creates---the consumption of
$\igap{3}{4}{4}$, never born, and of $\igap{2}{4}{5}$, whose code in this
run went to the rebirth of $\gap{3}{4}$---remain unmatched against the
orphaned births $\igap{2}{4}{4}$ and $\igap{3}{4}{5}$, and the audit fails:
$460\neq457$ at $z=131$.

The instance also illustrates how loose the worst-case bound of
Theorem~\ref{thm:audit} typically is.  The difference polynomial of the twin
is $Q(Z)\cdot\bigl[(Z-277)(Z-349)-(Z-285)(Z-341)\bigr]$, where $Q$ collects
the sixteen factors the two ledgers share; since
$277+349=285+341=626$, the bracket is the nonzero constant
$277\cdot349-285\cdot341\equiv 485\pmod{997}$, so this particular forgery
is accepted only if $z$ happens to equal one of the at most eight shared
encodings---probability at most $8/997$, far below the generic
$(4T+1)/p=25/997$, and zero at $z=131$.

For contrast with Proposition~\ref{prop:replay}: the forgery of
Table~\ref{tab:forged} respects the temporal rule and is caught by balance;
the replay forgery respects balance and is caught by the temporal rule.
The two checks fail independently, and Sections~\ref{sec:audit}
and~\ref{sec:necessity} prove that together they leave no third route.

\section{Concluding remarks}\label{sec:conclusion}

The results form a closed account of the constant-memory audit of dynamic
ordered sets.  The gap calculus (Lemmas~\ref{lem:tiling}--\ref{lem:queries})
and its incidence normal form (Proposition~\ref{prop:incidence} and
Theorem~\ref{thm:normalform}) turn order semantics into a finite path
rewriting system whose every answer is the forced image of one citation.  The
indenture discipline prices the certificate: at most four words of notch per update, three per query, three
per element at disclosure, against a stock of five words and a flag.  The
audit theorem (Theorem~\ref{thm:audit}) delivers perfect completeness and
soundness error $(4T+1)/p$ against unbounded adaptive maintainers; the
calibration is explicit, and a $61$-bit Mersenne prime covers a million
operations over a million keys with error below $2^{-38}$.  The necessity
results close the design space from below: $\binom{2n}{n}$ states for any
deterministic auditor (Theorem~\ref{thm:floor}), the same wall for visible
coins (Corollary~\ref{cor:visible}), and certain forgery without the
temporal rule (Proposition~\ref{prop:replay}).  The maintainer's theorem and
the attestation envelope (Theorems~\ref{thm:maintainer}
and~\ref{thm:envelope}) settle the other side of the ledger: the discipline
costs the structure one word per element and $O(1)$ work per operation, and
even its amortized rebalancing budget---at most $2m$ structural events---can
be policed by the auditor's counter rather than taken on faith.  Epoch
composition (Theorem~\ref{thm:epochs}) makes audit frequency a dial with
additive error.

Two boundary lines deserve plain statement.  First, detection here is
retrospective by design, and the necessity results say something about how
far that can be improved: an auditor below the $\binom{2n}{n}$ floor must
keep its randomness hidden, and whether such an auditor can be made
\emph{online}---rejecting a false answer the moment it is announced rather
than at the covering audit---is not a gap in the present analysis but a
recognized frontier: Naor and Rothblum~\cite{naorrothblum09} proved that
online memory checking with small secret state and nontrivial query cost
implies the existence of one-way functions, with the quantitative trade-offs
mapped in~\cite{dnrv09}.  Within the information-theoretic setting of this
paper, retrospective is therefore the correct word, and
Theorem~\ref{thm:epochs} prices the latency exactly.  Second, the scope is
the seven-operation interface of Definition~\ref{def:semantics}.  The gap
calculus certifies arrangement---adjacency, bracketing, extremality---and
does so completely; positional statistics such as the rank of a key are
aggregate properties of the arrangement, not local ones, and lie outside
what a two-word citation can force.  The interface boundary is drawn where
the uniqueness arguments of Lemma~\ref{lem:queries} end, and within that
boundary nothing has been left conditional: every constant is explicit,
every protocol total, and both halves of the tally---the construction and
its impossibility counterparts---have been cut from the same stick.

\subsection*{Acknowledgements}
The authors thank their colleagues at the Department of Computer Engineering
for discussions on transcript verification and on the history of accounting
instruments.

\end{document}